\newtheorem{theorem}{Theorem}
\newtheorem{lemma}{Lemma}
\begin{document}
\title{Quantized CSI-Based Tomlinson-Harashima Precoding in Multiuser MIMO Systems}
\author{Liang Sun, \emph{Member}, \emph{IEEE} and Ming Lei}

\maketitle


\begin{abstract}
This paper considers the implementation of Tomlinson-Harashima (TH)
precoding for multiuser MIMO systems based on quantized channel
state information (CSI) at the transmitter side. Compared with the
results in \cite{Windpassinger04}, our scheme applies to more
general system setting where the number of users in the system can
be less than or equal to the number of transmit antennas. We also
study the achievable average sum rate of the proposed quantized
CSI-based TH precoding scheme. The expressions of the upper bounds
on both the average sum rate of the systems with quantized CSI and
the mean loss in average sum rate due to CSI quantization are
derived. We also present some numerical results. The results show
that the nonlinear TH precoding can achieve much better performance
than that of linear zero-forcing precoding for both perfect CSI and
quantized CSI cases. In addition, our derived upper bound on the
mean rate loss for TH precoding converges to the true rate loss
faster than that of zero-forcing precoding obtained in
\cite{Jindal06} as the number of feedback bits becomes large. Both
the analytical and numerical results show that nonlinear precoding
suffers from imperfect CSI more than linear precoding does.
\end{abstract}

\begin{keywords}
Tomlinson-Harashima precoding, QR decomposition, random vector
quantization, zero-forcing, Givens transformation.
\end{keywords}

%
%

\section{Introduction}
Since the pioneering work \cite{teletar99} and \cite{Foschini96},
multiple-input multiple-output (MIMO) communication systems have
been extensively studied in both academic and industry communities
and becomes the key technology of most emerging wireless standards.
It is shown that significantly enhanced spectral efficiency and link
reliability can be achieved compared with conventional single
antenna systems \cite{teletar99,Wolniansky98}. In the downlink
multiuser MIMO systems, multiple users can be simultaneously served
by exploiting the spatial multiplexing capability of multiple
transmit antennas, rather than trying to maximize the capacity of a
single-user link.

The performance of a MIMO system with spatial multiplexing is
severely impaired by the multi-stream interference due to the
simultaneous transmission of parallel data streams. To reduce the
interference between the parallel data streams, both the processing
of the data streams at the transmitter (precoding) and the
processing of the received signals (equalization) can be used.
Precoding matches the transmission to the channel. Accordingly,
linear precoding schemes with low complexity are based on
zero-forcing (ZF) \cite{Haustein02} or minimum mean-square-error
(MMSE) criteria \cite{Joham02} and their improved version of channel
regularization \cite{Peel05}. In spite of very low complexity, the
linear schemes suffer from capacity loss. Nonlinear processing at
either the transmitter or the receiver provides an alternative
approach that offers the potential for performance improvements over
the linear approaches. This kind of approaches includes schemes
employing linear precoding combined with decision feedback
equalization (DFE) \cite{Yang94,Wolniansky98}, vector
perturbation\cite{Hochwald05}, Tomlinson-Harashima (TH) precoding
\cite{Windpassinger04,Fischer02}, and ideal dirty paper coding
\cite{Costa83,Weingarten06} which is too complex to be implemented
in practice.
Vector perturbation has been proposed for multiuser MIMO channel
model and can achieve rate near capacity \cite{Hochwald05}. It has
superior performance to linear precoding techniques, such as
zero-forcing beamforming and channel inversion, as well as TH
precoding \cite{Hochwald05}. However, this method requires the joint
selection of a vector perturbation of the signal to be transmitted
to all the receivers, which is a multi-dimensional integer-lattice
least-squares problem. The optimal solution with an exhaustive
search over all possible integers in the lattice is complexity
prohibited. Although some sub-optimal solutions, such as sphere
encoder\cite{Damen00}, exist, the complexity is still much higher
than TH precoding.

TH precoding can be viewed as a simplified version of vector
perturbation by sequential generation of the integer offset vector
instead of joint selection. This technique employs modulo arithmetic
and has a complexity comparable to that of linear precoders. It was
originally proposed to combat inter-symbol interference in highly
dispersive channels\cite{Harashima72} and can readily be extended to
MIMO channels \cite{Amico08,Windpassinger04}. Although it was shown
in \cite{Hochwald05} that TH precoding does not perform nearly as
well as vector perturbation for general SNR regime, it can achieve
significantly better performance than the linear pre-processing
algorithm, since it limits the transmitted power increase while
pre-eliminating the inter-stream interference\cite{Fischer02}. Thus,
it provides a good choice of tradeoff between performance and
complexity and has recently received much
attention\cite{Windpassinger04,Fischer02}. Note that TH precoding is
strongly related to dirty paper coding. In fact, it is a suboptimal
implementation of dirty paper coding proposed in \cite{Caire03}.

As many precoding schemes, the major problem for systems with TH
precoding is the availability of the channel state information (CSI)
at the transmitter. In time division duplex systems, since the
channel can be assumed to be reciprocal, the CSI can be easily
obtained from the channel estimation during reception. In frequency
division duplex (FDD) systems, the transmitter cannot estimate this
information and the CSI has to be communicated from the receivers to
the transmitter via a feedback channel. In this paper, we will focus
on the implementation of TH precoding in FDD systems. In this
context, for linear precoding, there have been extensive research
results for MIMO systems with quantized CSI at the transmitter
\cite{Au-Yeung07,Yoo_limited,Jindal06}. However, as far as we know,
there has been very few works directed at the design of TH precoding
based on the quantized CSI at the transmitter side. 
In this respect, the previous design in \cite{Israa11} is based on
MMSE criteria. Since the MSE is a function of both statistics
(moments) of the channels and the statistics of the channel
quantization error, the computation of the MSE requires the exact
distribution of the channels which can be very difficult to obtain
in practical systems. In addition, even if the exact distribution
function of channels could be obtained, the statistics of
quantization error can be very difficult to obtain for more general
channel fading other than uncorrelated Rayleigh fading even with
simple random vector quantization (RVQ) codebook. Instead, we aim to
design low complexity method which can be easily implemented in
practical systems with arbitrary channel fading. Our scheme employs
a more direct method which only depends on the quantized CDI of user
channels.

In this paper, we design a multiuser spatial TH precoding based on
quantized CSI and ZF criteria. As in \cite{Windpassinger04}, we
focus on high spectral efficiency, in particular non-binary
modulation alphabets and correspondingly we assume high
signal-to-noise ratios (SNRs). In contrast to \cite{Windpassinger04}
where perfect CSI is at the transmitter side, we assume only
quantized CDI is available at the transmitter. The feedforward
filter as well as the feedback filter are computed at the
transmitter only based on the available quantized CDI at the
transmitter side. In addition, our scheme also generalizes the
results in \cite{Windpassinger04} to more general system setting
where the number of users $K$ in system can be less than or equal to
the number of transmit antennas $n_T$. We also study the achievable
average sum rate of the proposed quantized CSI-based TH precoding
scheme by analytically characterizing the average sum rate and the
rate loss due to quantized CSI as functions of the number of
feedback bits per user. Our derived upper bound for TH precoding
tracks the true rate loss quite closely and appears to converge
faster than the upper bound for ZF precoding obtained in
\cite{Jindal06} as the number of feedback bits becomes large.


\begin{figure}[t]
\centering
\includegraphics[width= 0.8\columnwidth]{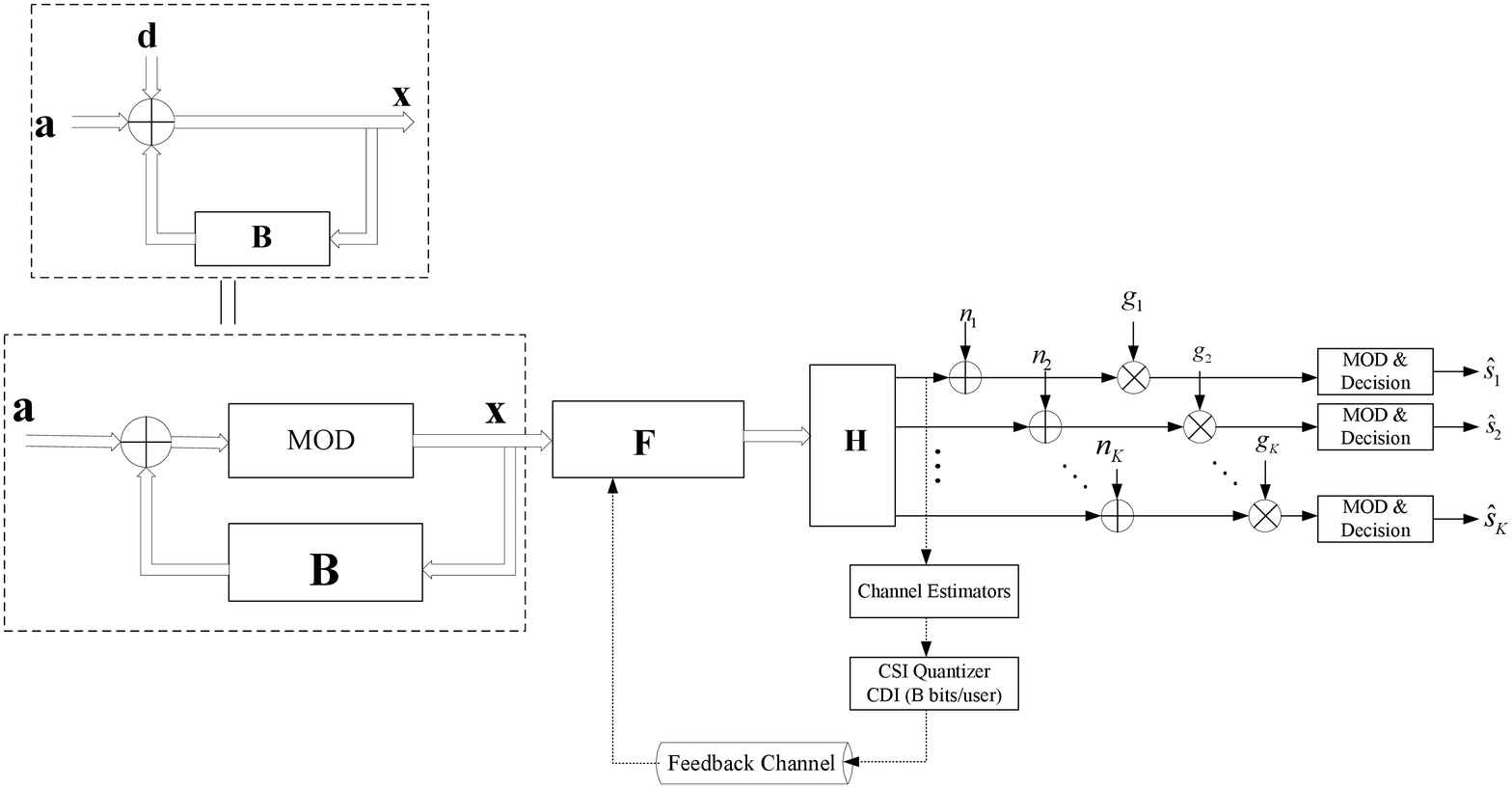}
\caption{TH precoding for multiuser MIMO downlink with quantized CSI
feedback.} \label{fig:system}
\end{figure}

\section{System Model}\label{sec:model_perfect_CSI}
As shown in Fig. \ref{fig:system}, we consider the multi-user
downlink systems where TH precoding \cite{Harashima72} is used at
the transmitter for multi-user interference pre-subtraction. The
transmitter is equipped with $n_T$ transmit antennas and $K$
decentralized users each has a single antenna such that $K \leq
n_T$. Let the vector $\mathbf{s} = [s_1, \cdots, s_K]^T \in
\mathbb{C}^{K\times 1}$ represent the modulated signal vector for
all users, where $s_k$ is the $k$-th modulated symbol stream for
user $k$. Here we assume that an $M$-ary square constellation ($M$
is a square number) is employed in each of the parallel data streams
and the constellation set is $\mathcal{A}= \big\{ s_I +j s_Q
|~s_I,s_Q \in \pm 1 \sqrt{\frac{3}{2(M-1)}}, \pm 3 \sqrt{\frac{3}{2
(M-1)}}, \cdots, \pm \sqrt{M}\sqrt{\frac{3}{2(M-1)}}\big\}$. In
general, the average transmit symbol energy is normalized, i.e.
$\mathbb{E}\{|s_k|^2\} = 1$. $\mathbf{s}$ is fed to the precoding
unit, which consists of a backward square matrix $\mathbf{B}$ and a
nonlinear operator $\text{MOD}_{\tau}(\cdot)$ which acts
independently over the real and imaginary parts of its input as
follows
\begin{equation}\label{eq:modM}
\text{MOD}_{\tau} (x) = x -\tau \bigg\lfloor \frac{x+\tau}{2\tau}
\bigg\rfloor ,
\end{equation}
where $\tau = \sqrt{M}\sqrt{\frac{3}{2(M-1)}}$, $\lfloor z \rfloor$
is the largest integer not exceeding $z$. $\mathbf{B}$ must be
strictly lower triangular to allow data precoding in a recursive
fashion \cite{Windpassinger04}. The construction of $\mathbf{B}$
will depend on the level of CSI of the supported users available at
the transmitter side. If we temporarily neglect the nonlinear
operator $\text{MOD}_{\tau} (\cdot)$ in Fig. \ref{fig:system}, the
channel signal vector $\mathbf{x} = [x_1, \cdots, x_K]^T$ can be
generated as
\begin{align}\label{eq:THP1}
x_1 &= s_1, \nonumber\\
x_k &= s_k - \sum_{l=1}^{k-1} [\mathbf{B}]_{k,l}x_l, ~ k =2, \cdots,
K.
\end{align}
In this way, if $\mathbf{b} = \left( \mathbf{I} + \mathbf{B} \right
)^{-1} \mathbf{s}$ become large in the presence of deep fading the
transmit power can be greatly increased. TH precoding modulo
(\ref{eq:modM}) reduces the transmit symbols into the boundary
square region of $ \mathcal{R} = \{x+jy | x, y\in (-\tau, \tau)\}$.
With (\ref{eq:modM}) and (\ref{eq:THP1}) the channel signals are
equivalently given as
\begin{equation}
x_k =  s_k + d_k - \sum_{l=1}^{k-1} [\mathbf{B}]_{k,l}x_l, ~ k =2,
\cdots, K,
\end{equation}
where $d_k \in \left\{ 2 \tau (p_I + j p_Q) | ~ p_I , p_Q \in
\mathbb{Z} \right\}$ is properly selected to ensure the real and
imaginary parts of $x_k$ are constrained into $\mathcal{R}$
\cite{Windpassinger04}. The constellation of the modified data
symbols $ v_k = s_k + d_k $ is simply the periodic extension of the
original constellation along the real and imaginary axes.
Equivalently, the effective data symbols $ v_k $ ($k =2, \cdots, K$)
are passed into $\mathbf{B}$, which is implemented by the feedback
structure. Thus, we have
\begin{equation}
\mathbf{v} = \mathbf{C} \mathbf{x},
\end{equation}
where $\mathbf{v} = [v_1,v_2, \cdots, v_K]^T$ and
\begin{equation}
\mathbf{C} = \mathbf{B} +\mathbf{I}.
\end{equation}
We will make the standard observation that the elements of
$\mathbf{x}$ are almost uncorrelated and uniformly distributed over
the Voronoi region of the constellation $\mathcal{R}$, and that such
a model becomes more precise as $n_T$ increases \cite[Theorem
3.1]{Fischer02}. With $\mathbb{E}\left\{ \mathbf{s}\mathbf{s}^H
\right\} = \mathbf{I}$, the covariance of $\mathbf{x}$ can be
accurately approximated as $\mathbf{R_x} =
\frac{M}{M-1}\mathbf{I}$\cite{Fischer02}. Moreover, the induced
shaping loss by the non-Gaussian signaling leads to the fact that
the achievable rate can be up to $1.53$ dB from the channel capacity
\cite{Wesel98}. However, as indicated in \cite{Windpassinger04}, the
so-called shaping loss can be bridged by higher-dimensional
precoding lattices. A scheme named ``inflated lattice'' precoding
has been proved to be capacity-achieving in \cite{Uri04}. Thus,
following \cite{Windpassinger04}, we will ignore the shaping gap in
this work.

A spatial channel pre-equalization is performed at the transmitter
side using a feedforward precoding matrix $\mathbf{F} \in
\mathbb{C}^{n_T\times K} $. Throughout this work, we assume equal
power allocation to all supported users. Then the received signal
can be written as
\begin{equation}
\mathbf{r} = \sqrt{\frac{P}{\kappa}}\mathbf{H}\mathbf{F}\mathbf{x} +
\mathbf{n},
\end{equation}
where $\mathbf{H} = \left[ \mathbf{h}_1^T, \cdots, \mathbf{h}_K^T
\right]^T $ is the compact flat fading channel matrix consisting of
all users's channel vectors and $\mathbf{h}_k \in \mathbb{C}^{n_T
\times 1} $ is the channel from the transmitter to user
$k$\footnote{The ordering of the users' channel vectors in
$\mathbf{H}$ will affect the precoding order of the users'
information signals and further affect the performance of each user.
However, at this stage we assume the user channel vectors are
randomly ordered. Thus the TH precoding order of the users is $1, 2,
\cdots, K$.}. $\kappa$ is used for transmit power normalization.
$\mathbf{F} \in \mathbb{C}^{n_T \times K}$ satisfies transmit power
constraint $\frac{P}{\kappa}\text{Tr}
\{\mathbf{F}\mathbf{R}_{\mathbf{x}}\mathbf{F}^H\}
=\frac{P}{\kappa}\frac{M}{M-1}\text{Tr}\{\mathbf{F}\mathbf{F}^H\}=
P$. As for $\mathbf{B}$, $\mathbf{F}$ is also designed based on the
level of CSI available to the transmitter. We assume $\mathbf{n}$ is
the white additive noise at all the receivers with the covariance
$\mathbf{R}_{\mathbf{n}} = \mathbf{I}$ without loss of generality.
Each receiver compensates for the channel gain by dividing by a
factor $g_k$ prior to the modulo operation as follows:
\begin{align}
\mathbf{y} & = \mathbf{G} \left ( \sqrt{\frac{P}{\kappa}}
\mathbf{H}\mathbf{F}\mathbf{x} + \mathbf{n} \right ),
\end{align}
where $\mathbf{G} = \text{diag} \left( g_{1,1}, \cdots, g_{K,K}
\right)$.

Throughout this work, we assume each receiver can obtain perfect CSI
of his own through channel estimation and feeds back this
information to the transmitter via a zero-delay feedback link with
possible rate-constraints. In this part, we assume this feedback
information is perfect at the transmitter. The design of TH
precoding for MU MIMO systems with perfect CSI has been studied in
\cite{Windpassinger04} where, for simplicity, the authors restrict
the work to the systems with equal number of transmit antennas and
users. In this part, we review and extend that construction for
systems with arbitrary number of users no more that of the transmit
antennas.

With perfect CSI at the transmitter, let the QR decomposition of the
compact channels be $\mathbf{H} = \mathbf{RQ}$, where $\mathbf{R} =
\left[r_{i,j}\right] \in \mathbb{C}^{K \times K}$ is a lower left
triangular matrix and $\mathbf{Q} \in \mathbb{C}^{K \times n_T}$ is
a semi-unitary matrix with orthonormal rows which satisfies
$\mathbf{Q} \mathbf{Q}^H = \mathbf{I}$. Then the precoding matrix
$\mathbf{F}$ is given as $\mathbf{F}= \mathbf{Q}^H$, the scaling
matrix $\mathbf{G}$ is given as $\mathbf{G} =
\sqrt{\frac{\kappa}{P}}\mathbf{\Delta}$ with $\mathbf{\Delta}=
\text{diag} \left( r_{1,1}^{-1}, \cdots, r_{K,K}^{-1} \right)$ and
the feedback matrix reads $ \mathbf{B} = \mathbf{\Delta HF}
-\mathbf{I} = \mathbf{\Delta R} -\mathbf{I}$. According to the
transmit power constraint $\frac{P}{\kappa} \frac{M}{M-1} K = P$, we
have $\kappa = \frac{M}{M-1} K$. With the processing, the effective
received data symbols $\mathbf{y}$ corrupted by additive noise can
be written as\cite{Windpassinger04}
\begin{equation}
\mathbf{y} = \mathbf{v} + \mathbf{G}\mathbf{n}.
\end{equation}
At the receivers, each symbol in $\mathbf{y}$ is firstly modulo
reduced into the boundary region of the signal constellation
$\mathcal{A}$. A quantizer of the original constellation will follow
the modulo operation to detect the received signals. The SNR $\xi_k$
for receiver $k$ can be written as
\begin{equation}\label{eq:SNR_per}
\xi_k = \frac{P}{\kappa} |r_{k,k}|^2.
\end{equation}

In the following part, we will describe how to implement the
precoding with quantized CSI obtained at the transmitter.

\section{System with Quantized transmit CSI}
In practical systems, perfect CSI is never available at the
transmitter. For example, in a FDD system, the transmitter obtains
CSI for the downlink through the limited feedback of $B$ bits by
each receiver. Following the studies of quantized CSI feedback in
\cite{Jindal06,Yoo_limited}, channel direction vector is quantized
at each receiver, and the corresponding index is fed back to the
transmitter via an error and delay-free feedback channel. Given the
quantization codebook $\mathbb{W} = \{\mathbf{w}_1,\cdots,
\mathbf{w}_{n}\}$ ($\mathbf{w}_{i} \in \mathbb{C}^{1\times n_T} $),
which is known to both the transmitter and all the receivers, the
$k$-th receiver selects the quantized channel direction vector of
its own channel as follows:
\begin{equation}
{\hat{\mathbf{h}}_k} = \text{arg} \max_{\mathbf{w}_i
\in\mathbb{W}}\{ |\bar{{\mathbf{h}}}_k \mathbf{w}_i|^2\},
\end{equation}
where $\bar{{\mathbf{h}}}_k = \frac{\mathbf{h}_k}{
\|\mathbf{h}_k\|}$ is the channel direction vector of user $k$.

In this work, we use RVQ codebook, in which the $n$ quantization
vectors are independently and isotropically distributed on the
$n_T$--dimensional complex unit sphere. Although RVQ is suboptimal
for a finite-size system, it is very amenable to analysis and also
its performance is close to the optimal quantization\cite{Jindal06}.
Using the result in \cite{Jindal06}, for user $k$ we have
\begin{align}\label{eq:channel_decom}
\bar{\mathbf{h}}_k = \hat{\mathbf{h}}_k\cos\theta_k +
\tilde{\mathbf{h}}_k\sin\theta_k,
\end{align}
where $\cos^2\theta_k = |\bar{\mathbf{h}}_k\hat{\mathbf{h}}_k^H|^2$,
$\tilde{\mathbf{h}}_k \in \mathbb{C}^{ 1\times n_T}$ is a unit norm
vector isotropically distributed in the orthogonal complement
subspace of $\hat{\mathbf{h}}_k$ and independent of $\sin\theta_k$.
Then $\mathbf{H}$ can be written as
\begin{equation}
\mathbf{H} = \mathbf{\Gamma} \left( \mathbf{\Phi} \hat{\mathbf{H}} +
\mathbf{\Omega} \tilde{\mathbf{H}} \right),
\end{equation}
where $\mathbf{\Gamma} = \text{diag} \big(\rho_1,\cdots ,\rho_K
\big) $ with $ \rho_k = \|\mathbf{h}_k\| $, $\mathbf{\Phi}=
\text{diag} \left(\cos\theta_1,\cdots ,\cos\theta_K \right)$ and
$\mathbf{\Omega} =\text{diag} \big(\sin\theta_1,\\ \cdots ,
\sin\theta_K\big) $, $\hat{\mathbf{H}} = \left
[\hat{\mathbf{h}}_1^T, \cdots ,\hat{\mathbf{h}}_K^T \right ]^T $ and
$\tilde{\mathbf{H}} = \left [\tilde{\mathbf{h}}_1^T, \cdots
,\tilde{\mathbf{h}}_K^T \right ]^T $. For simplicity of analysis, in
this work we consider the quantization cell approximation used in
\cite{Mukkavilli03,Yoo_limited}, where each quantization cell is
assumed to be a Voronoi region of a spherical cap with surface area
approximately equal to $\frac{1}{n}$ of the total surface area of
the $n_T$-dimensional unit sphere. For a given codebook
$\mathbb{W}$, the actual quantization cell for vector
$\mathbf{w}_i$, $\mathcal{R}_i = \left\{\bar{\mathbf{h}}:
|\bar{\mathbf{h}} \mathbf{w}_i|^2 \geq |\bar{\mathbf{h}}
\mathbf{w}_j|^2, \forall ~ i \neq j \right\} $, is approximated as $
\tilde{\mathcal{R}}_i \approx \left\{\bar{\mathbf{h}}:
|\bar{\mathbf{h}} \mathbf{w}_i| \geq  1- \delta \right\}$, where
$\delta = 2^{- \frac{B}{n_T -1}}$.

With the quantized CDI at the transmitter side, the transmitter
obtains the feedforward precoding matrix $\mathbf{F}$ and feedback
matrix $\mathbf{B} $ through the QR decomposition of compact channel
matrix $\hat{\mathbf{H}} $ in the same way as the QR decomposition
of matrix $\mathbf{H}$, i.e. $\hat{\mathbf{H}} =
\hat{\mathbf{R}}\hat{\mathbf{Q}}$, where the matrices
$\hat{\mathbf{R}}$ and $\hat{\mathbf{Q}}$ have the same structure as
the matrices $\mathbf{R}$ and $\mathbf{Q}$ respectively. Then we
have $\mathbf{F} = \hat{\mathbf{Q}}^H $ and $\mathbf{B} =
\left(\text{diag}\left\{\hat{\mathbf{R}}
\right\}\right)^{-1}\hat{\mathbf{R}}- \mathbf{I}$. In addition, the
scaling matrix at the receivers now becomes
\begin{equation}
\mathbf{G} = \sqrt{\frac{\kappa}{P}} \left( \mathbf{\Gamma}
\mathbf{\Phi} ~\text{diag}\left\{\hat{\mathbf{R}}
\right\}\right)^{-1}.
\end{equation}
Using the same operation at the receiver side as that in perfect CSI
case to detect the received signals, the detected signal vector
$\hat{\mathbf{y}}$ can be further written as
\begin{align}\label{eq:Rx_sig_limited}
\hat{\mathbf{y}} &=  \mathbf{G}\left(
\sqrt{\frac{P}{\kappa}}\mathbf{H} \mathbf{F} \mathbf{x}
+\mathbf{n}\right)\nonumber\\
&= \mathbf{G} \sqrt{\frac{P}{\kappa}}\mathbf{\Gamma}  \left(
\mathbf{\Phi} \hat{\mathbf{H}} + \mathbf{\Omega} \tilde{\mathbf{H}}
\right) \mathbf{F} \mathbf{x} + \mathbf{G}\mathbf{n}\nonumber\\
&= \mathbf{v} + \left( \mathbf{\Phi}
~\text{diag}\left\{\hat{\mathbf{R}}
\right\}\right)^{-1}\mathbf{\Omega}
\tilde{\mathbf{H}}\hat{\mathbf{Q}}^H\mathbf{x} +
\sqrt{\frac{\kappa}{P}}\left( \mathbf{\Gamma}
\mathbf{\Phi}~\text{diag}\left\{\hat{\mathbf{R}}
\right\}\right)^{-1}\mathbf{n},
\end{align}
where we have used the relationship $\mathbf{v} =
\left(\text{diag}\left\{\hat{\mathbf{R}}
\right\}\right)^{-1}\hat{\mathbf{R}} \mathbf{x}$. In
(\ref{eq:Rx_sig_limited}), the first term is the useful signal
vector for all the users and the second term is interference signal
caused by the quantized CSI.

According to (\ref{eq:Rx_sig_limited}), the output
signal-to-interference-plus-noise ratio (SINR) $\gamma_k$ for
receiver $k$ can be written as
\begin{align}\label{eq:SINR_limited}
\gamma_k &= \frac{1}{
\frac{\sin^2\theta_k}{|\hat{r}_{k,k}|^2\cos^2\theta_k}
\|\tilde{\mathbf{h}}_k \hat{\mathbf{Q}}^H \|^2+
\frac{\kappa}{P}\frac{1}{\rho_k^2|\hat{r}_{k,k}|^2 \cos^2\theta_k} } \nonumber\\
&= \frac{\frac{P}{\kappa} \rho_k^2|\hat{r}_{k,k}|^2 \cos^2\theta_k}{
\frac{P}{\kappa} \rho_k^2 \| \tilde{\mathbf{h}}_k \hat{\mathbf{Q}}^H
\|^2 \sin^2\theta_k+1 }.
\end{align}

\section{Average Sum Rate Analysis under Quantized CSI Feedback}

In this section we will study the achievable average sum rate of the
proposed quantized CSI feedback TH precoding scheme. Although the
exact distribution of each term in the expression of the output SINR
$\gamma_k$ in $(15)$ can be obtained (see for the detailed
information), these terms are located at both the numerator and the
denominator in $(15)$. Thus, to obtain the exact closed-form
expression of the distribution of output SINR $\gamma_k$ can be very
difficult if not impossible, not to mention the exact closed-form
expression of the average sum rate. Thus, to simplify analysis, we
have appealed to studying some bounds of the average sum rate and
the average sum rate loss instead of exact results.
For tractability, throughout this section we assume each user's
channel is Rayleigh-faded. In the following subsection, we will
first study the statistical distribution of the power of
interference signal at each user caused by quantized CSI.

\subsection{Interference Part}
In this subsection, assuming Rayleigh fading channel and RVQ for
quantized CSI feedback, we will derive the statistical distribution
of interference part $\frac{P}{\kappa} \rho_k^2 \|
\tilde{\mathbf{h}}_k \hat{\mathbf{Q}}^H \|^2 \sin^2\theta_k$ in
(\ref{eq:SINR_limited}). It is well known that $ \rho_k^2 $ has a
$\chi_{2n_T}^2$ distribution and the distribution of $
\sin^2\theta_k$ is given in \cite{Au-Yeung07,Jindal06}. However,
since $\tilde{\mathbf{h}}_k \perp \hat{\mathbf{h}}_k $ ($k=1,\cdots,
K$) and $\hat{\mathbf{Q}}$ is determined by $\hat{\mathbf{h}}_k$
($k=1,\cdots, K$), $\tilde{\mathbf{h}}_k $ for $k=1,\cdots, K$ are
not independent of $\hat{\mathbf{Q}}$. The distribution of the term
$\| \tilde{\mathbf{h}}_k \hat{\mathbf{Q}}^H \|^2$ is still unknown
and to obtain the exact result is not trivial. The following lemma
presents the exact distribution of this interference term. It is one
of the key contributions of this paper.
\begin{lemma} \label{lemma:beta_k_pdf}
For $ 1 <K < n_T$, the random variables $\varepsilon_k = \|
\tilde{\mathbf{h}}_k \hat{\mathbf{Q}}^H \|^2$ for $k=1, \cdots, K$
follow the same beta distribution with shape $(K-1)$ and $(n_T-K)$
which is denoted as $\varepsilon_k \sim \mathrm{Beta}(K-1, n_T-K)$.
In addition, the probability density function (p.d.f.) of
$\varepsilon_k$ is given as
\begin{eqnarray}\label{eq:pdf_interfers}
f_{\varepsilon_k} (x) = \frac{1}{\beta(K-1, n_T-K)}x^{K-2}(1-
x)^{n_T-K-1}.
\end{eqnarray}
where $\beta(a, b) = \int_{0}^{1} t^{a-1} t^{b-1} {\rm d}t$
is beta function \cite{Gradshteyn2000}. Specially, when $K=1$ there
is no interference term. When $K=n_T$, $\varepsilon_k = \|
\tilde{\mathbf{h}}_k \hat{\mathbf{Q}}^H \|^2$ is equal to $1$ which
is a constant.
\end{lemma}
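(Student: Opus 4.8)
The plan is to read $\varepsilon_k$ as the squared length of an orthogonal projection and then invoke rotational invariance. Since $\hat{\mathbf Q}$ has orthonormal rows and $\hat{\mathbf H}=\hat{\mathbf R}\hat{\mathbf Q}$ with $\hat{\mathbf R}$ invertible lower-triangular, the rows of $\hat{\mathbf Q}$ form an orthonormal basis of the $K$-dimensional row space $\mathcal S\subset\mathbb C^{1\times n_T}$ of $\hat{\mathbf H}$, so $\hat{\mathbf Q}^H\hat{\mathbf Q}$ is the orthogonal projector onto $\mathcal S$ and
$$\varepsilon_k=\tilde{\mathbf h}_k\hat{\mathbf Q}^H\hat{\mathbf Q}\tilde{\mathbf h}_k^H=\big\|\,\mathrm{proj}_{\mathcal S}\tilde{\mathbf h}_k\,\big\|^2 .$$
Because $\hat{\mathbf h}_k\in\mathcal S$ while, by the discussion following (\ref{eq:channel_decom}), $\tilde{\mathbf h}_k\perp\hat{\mathbf h}_k$, set $\mathcal S':=\mathcal S\cap\hat{\mathbf h}_k^{\perp}$; then $\mathcal S=\mathrm{span}\{\hat{\mathbf h}_k\}\oplus\mathcal S'$ is an orthogonal splitting, the component of $\tilde{\mathbf h}_k$ along $\hat{\mathbf h}_k$ vanishes, and hence $\varepsilon_k=\|\mathrm{proj}_{\mathcal S'}\tilde{\mathbf h}_k\|^2$. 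Since the quantized directions $\hat{\mathbf h}_1,\dots,\hat{\mathbf h}_K$ come from continuous distributions they are linearly independent almost surely, so $\dim\mathcal S'=K-1$.

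Next I would use the independence built into the model. The error direction $\tilde{\mathbf h}_k$ is a function of user $k$'s channel and the common codebook only, so given $\hat{\mathbf h}_k$ it is isotropic on the unit sphere of the $(n_T-1)$-dimensional space $\hat{\mathbf h}_k^{\perp}$ and independent of $\{\hat{\mathbf h}_j\}_{j\neq k}$. Consequently, conditioning on the whole collection $\{\hat{\mathbf h}_j\}_{j=1}^{K}$ freezes $\mathcal S'$ to a deterministic $(K-1)$-dimensional subspace of $\hat{\mathbf h}_k^{\perp}$ while keeping $\tilde{\mathbf h}_k$ uniform on the unit sphere of $\hat{\mathbf h}_k^{\perp}$. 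Choosing an orthonormal basis of $\hat{\mathbf h}_k^{\perp}$ whose first $K-1$ vectors span $\mathcal S'$, $\tilde{\mathbf h}_k$ has, in these coordinates, the law of $\mathbf g/\|\mathbf g\|$ with $\mathbf g\sim\mathcal{CN}(\mathbf 0,\mathbf I_{n_T-1})$, so
$$\varepsilon_k=\frac{\sum_{i=1}^{K-1}|g_i|^2}{\sum_{i=1}^{n_T-1}|g_i|^2}=\frac{U}{U+V},$$
where $U=\sum_{i=1}^{K-1}|g_i|^2$ and $V=\sum_{i=K}^{n_T-1}|g_i|^2$ are independent Gamma variates of shapes $K-1$ and $n_T-K$; hence $\varepsilon_k\sim\mathrm{Beta}(K-1,n_T-K)$, which is exactly (\ref{eq:pdf_interfers}). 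As this conditional law does not depend on the conditioning it is also the unconditional law, and the argument is symmetric in $k$.

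Finally I would settle the boundary cases. For $K=1$ there is no index $j\neq k$, so $\mathcal S'=\{\mathbf 0\}$ and $\varepsilon_k=0$ — equivalently $\hat{\mathbf Q}$ collapses to $\hat{\mathbf h}_1$ and $\tilde{\mathbf h}_1\hat{\mathbf h}_1^H=0$ — so there is no interference term; for $K=n_T$ one has $\mathcal S=\mathbb C^{1\times n_T}$, whence $\mathcal S'=\hat{\mathbf h}_k^{\perp}$ and $\varepsilon_k=\|\tilde{\mathbf h}_k\|^2=1$. The step that requires genuine care is the independence argument: one must verify that conditioning on all the quantized directions really does decouple the (then deterministic) subspace $\mathcal S'$ from the still-isotropic $\tilde{\mathbf h}_k$, and that the RVQ/cell model indeed makes $\tilde{\mathbf h}_k$ isotropic in $\hat{\mathbf h}_k^{\perp}$ independently of the other users; after that, only the standard Gamma-ratio characterization of the beta distribution is needed.
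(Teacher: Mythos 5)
Your proposal is correct, and it reaches the result by a genuinely different and arguably cleaner route than the paper. The paper first treats $k=1$ only: there $\hat{\mathbf{q}}_1=\hat{\mathbf{h}}_1$, so after rotating the isotropic basis to $\mathbf{e}_1,\dots,\mathbf{e}_K$ it writes $\tilde{\mathbf{h}}_1=[0,\mathbf{v}]$ and computes the law of $\sum_{l=1}^{K-1}|v_l|^2$ by integrating the explicit joint density of the squared moduli of a uniform sphere vector over a simplex; it then needs a separate permutation-matrix/Givens-rotation argument, resting on the uniqueness of the QR decomposition with positive diagonal, to transport the result to general $k$. You instead observe that $\hat{\mathbf{Q}}^H\hat{\mathbf{Q}}$ is the orthogonal projector onto the row space $\mathcal{S}$ of $\hat{\mathbf{H}}$ -- a quantity invariant under reordering the users -- split off $\mathrm{span}\{\hat{\mathbf{h}}_k\}$ to reduce $\varepsilon_k$ to the projection of an isotropic unit vector in $\hat{\mathbf{h}}_k^{\perp}$ onto a conditionally fixed $(K-1)$-dimensional subspace, and invoke the Gamma-ratio characterization $U/(U+V)\sim\mathrm{Beta}(K-1,n_T-K)$. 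This makes the symmetry in $k$ automatic and dispenses with both the simplex integration and the Givens machinery. The two steps you rightly flag as needing care -- that $\tilde{\mathbf{h}}_k$ is isotropic in $\hat{\mathbf{h}}_k^{\perp}$ given $\hat{\mathbf{h}}_k$ and independent of $\{\hat{\mathbf{h}}_j\}_{j\neq k}$, and that the $\hat{\mathbf{h}}_j$ are almost surely linearly independent -- are exactly the assumptions the paper also relies on (citing the RVQ model and asserting the quantized directions are independent and isotropic), so your proof stands on the same foundations while being shorter and more transparent.
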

\begin{proof}
See Appendix \ref{proof:lemma_beta_k_pdf}.
\end{proof}
\emph{Lemma} \ref{lemma:beta_k_pdf} implies a very interesting
result that, with randomly ordered user channel vectors, the signal
of the user which is precoded ahead suffers from the same
interference signal power as the signals of the users which are
precoded afterwards. In the following we will only focus on the
general situation that $1 < K < n_T $. However, it is easy to check
that all the obtained results also apply to the special cases of
$K=1$ and $K = n_T$.

The expectation of the logarithm of the interference term
$\varepsilon_k$, which is shown to be useful in the following
theorems, is obtained in the following lemma.
\begin{lemma}\label{lemma:exp_interference}
The expectation of the logarithm of the interference term
$\varepsilon_k$ is given by
\begin{align}\label{eq:exp_interference}
\mathbb{E}_{\mathbf{H}, \mathbb{W}} \left [- \log_2 \left( \|
\tilde{\mathbf{h}}_k \hat{\mathbf{Q}}^H \|^2 \right) \right ] =
\log_2 e \sum_{m = K-1}^{n_T-2} \sum_{l=0}^{n_T -m-2}
\frac{(n_T-2)!}{ m!~ l!~ (n_T -m-2 - l) !} (-1)^{l} \frac{1}{m+l}
\end{align}
\end{lemma}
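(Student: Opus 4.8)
The plan is to push everything onto the scalar random variable $\varepsilon_k$ whose law is already pinned down by \emph{Lemma} \ref{lemma:beta_k_pdf}, and then to evaluate a single one-dimensional integral by an integration by parts tailored to reproduce the stated double sum. Since $\mathbb{E}_{\mathbf{H},\mathbb{W}}[-\log_2\varepsilon_k] = (\log_2 e)\,\mathbb{E}[-\ln\varepsilon_k]$, and by \emph{Lemma} \ref{lemma:beta_k_pdf} $\varepsilon_k\sim\mathrm{Beta}(K-1,n_T-K)$ with $1<K<n_T$, it is enough to compute $\mathbb{E}[-\ln\varepsilon_k]$ from the p.d.f.\ in (\ref{eq:pdf_interfers}).

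First I would replace the p.d.f.\ by the c.d.f. Because the shape parameters $K-1$ and $n_T-K$ are positive integers summing to $n_T-1$, the regularized incomplete beta function reduces to the finite binomial sum
\[
F_{\varepsilon_k}(x)=\sum_{m=K-1}^{n_T-2}\binom{n_T-2}{m}x^{m}(1-x)^{\,n_T-2-m},\qquad 0\le x\le 1,
\]
which one checks by differentiating and telescoping against (\ref{eq:pdf_interfers}) (equivalently, this is the standard incomplete-beta identity \cite{Gradshteyn2000}).

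Next I would integrate by parts,
\[
\mathbb{E}[-\ln\varepsilon_k]=\int_0^1(-\ln x)f_{\varepsilon_k}(x)\,\mathrm{d}x=\big[-\ln x\,F_{\varepsilon_k}(x)\big]_0^1+\int_0^1\frac{F_{\varepsilon_k}(x)}{x}\,\mathrm{d}x .
\]
The boundary term vanishes at $x=1$ because $\ln 1=0$, and at $x\to 0^+$ because $F_{\varepsilon_k}(x)$ behaves like $\binom{n_T-2}{K-1}x^{K-1}$ with exponent $K-1\ge 1$; this is the one place the hypothesis $K>1$ is needed, and it simultaneously makes $F_{\varepsilon_k}(x)/x=O(x^{K-2})$ integrable at $0$. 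I would then expand each $(1-x)^{\,n_T-2-m}$ by the binomial theorem, integrate the resulting monomials via $\int_0^1 x^{m-1+l}\,\mathrm{d}x=\frac{1}{m+l}$ (legitimate since $m-1+l\ge 0$ for $m\ge K-1\ge 1$), collapse $\binom{n_T-2}{m}\binom{n_T-2-m}{l}$ into the trinomial coefficient $\frac{(n_T-2)!}{m!\,l!\,(n_T-m-2-l)!}$, and multiply through by $\log_2 e$; this reproduces (\ref{eq:exp_interference}) verbatim.

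None of the steps is deep; the only care needed is bookkeeping — justifying the vanishing of the boundary term (uses $K>1$), confirming integrability of $x^{m-1+l}$ on $[0,1]$, and noting that all summation interchanges are over finite index sets. As a consistency check I would collapse the inner sum using $\sum_{l=0}^{n}\binom{n}{l}\frac{(-1)^l}{m+l}=\frac{(m-1)!\,n!}{(m+n)!}$, which turns the right-hand side of (\ref{eq:exp_interference}) into $\log_2 e\sum_{m=K-1}^{n_T-2}\frac{1}{m}=\log_2 e\,[\psi(n_T-1)-\psi(K-1)]$, the textbook value of $\mathbb{E}[-\ln X]$ for $X\sim\mathrm{Beta}(K-1,n_T-K)$, in agreement with \emph{Lemma} \ref{lemma:beta_k_pdf}.
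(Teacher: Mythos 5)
Your proposal is correct and is essentially the paper's own argument: both reduce to the finite binomial-sum form of the regularized incomplete beta c.d.f.\ of $\varepsilon_k\sim\mathrm{Beta}(K-1,n_T-K)$, expand $(1-x)^{n_T-2-m}$ binomially, and obtain the factor $\frac{1}{m+l}$ term by term. The paper writes $\mathbb{E}[-\ln Y]=\int_0^\infty \Pr(Y\le e^{-x})\,\mathrm{d}x$ and integrates $e^{-(m+l)x}$ over $[0,\infty)$, which is exactly your integration by parts $\int_0^1 F(x)/x\,\mathrm{d}x$ after the substitution $x\mapsto e^{-x}$, so the two derivations coincide up to a change of variables; your digamma consistency check is a nice addition not present in the paper.
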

\begin{proof}
See Appendix \ref{sec:proof_exp_interference}.
\end{proof}

\subsection{Upper Bounds on the Average Sum Rate Loss and Sum Rate}
The instantaneous achievable rates for user $k$ with perfect CSI and
quantized CSI feedback are given as
\begin{equation}\label{eq:rate_per}
R_{P,k}= \log_2\left(1+ \xi_k \right)
\end{equation}
and
\begin{equation}\label{eq:rate_limit}
R_{Q,k} = \log_2(1 + \gamma_k),
\end{equation}
respectively. The following theorem quantifies the average sum rate
performance degradation as a function of the feedback rate.
\begin{theorem}\label{theorem:rate_loss}
With $B$ feedback bits per user, the average sum rate loss of user
$k$ due to quantized CSI feedback can be upper bounded
by\footnote{Note that, in contrast to ZF precoding, for TH precoding
different users have different average sum rate loss. Interestingly,
simulation results show that, for finite SNR, the users precoded
earlier will suffer from greater sum rate loss. However, in this
work will adopt the average sum rate loss over all supported users.}
\begin{align}\label{eq:sum_rate_loss}
\Delta R_{k} & = \mathbb{E}_{\mathbf{H}, \mathbb{W}}  \{ R_{P,k} - R_{Q,k} \}  \nonumber\\
& \leq \Delta R = \log_2\left( 1+  c P ~ 2^{-\frac{B}{n_T-1}}
\right) + \frac{\log_2(e)}{n_T-1} \sum_{i=1}^{n_T-1} \beta\left( n,
\frac{i}{n_T-1}\right),
\end{align}
where $c = \frac{(K-1)n_T }{\kappa(n_T-1)}$ and $n = 2^B$ is the
size of codebook.
\end{theorem}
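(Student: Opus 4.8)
The plan is to split the per-user rate loss into three contributions --- residual interference, direction misalignment, and a term that vanishes after averaging --- bound each, and show that the first two reproduce the two terms of $\Delta R$. Starting from the SINR in (\ref{eq:SINR_limited}), I would abbreviate the useful signal power by $S_k = \frac{P}{\kappa}\rho_k^2|\hat r_{k,k}|^2\cos^2\theta_k$, the interference power by $I_k = \frac{P}{\kappa}\rho_k^2\varepsilon_k\sin^2\theta_k$, and set $\hat\xi_k := \frac{P}{\kappa}\rho_k^2|\hat r_{k,k}|^2$, so that $\gamma_k = S_k/(1+I_k)$ and $1+\gamma_k = (1+I_k+S_k)/(1+I_k)$. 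Since $I_k\ge 0$ and $\cos^2\theta_k\le 1$ we have $1+I_k+S_k \ge 1+S_k \ge \cos^2\theta_k + S_k = \cos^2\theta_k(1+\hat\xi_k)$, hence $1+\gamma_k \ge \cos^2\theta_k(1+\hat\xi_k)/(1+I_k)$, and therefore
\[
R_{P,k}-R_{Q,k} = \log_2\frac{1+\xi_k}{1+\gamma_k} \le \big[\log_2(1+\xi_k)-\log_2(1+\hat\xi_k)\big] - \log_2\cos^2\theta_k + \log_2(1+I_k).
\]
I would then take $\mathbb{E}_{\mathbf{H},\mathbb{W}}$ and treat the three terms separately.

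For the interference term, concavity of $\log_2$ gives $\mathbb{E}[\log_2(1+I_k)]\le \log_2(1+\mathbb{E}[I_k])$. The gain $\rho_k=\|\mathbf{h}_k\|$ is independent of all directional quantities, and $\varepsilon_k$ is independent of $\sin^2\theta_k$ (both as in the decomposition (\ref{eq:channel_decom}) and the proof of \emph{Lemma} \ref{lemma:beta_k_pdf}), so $\mathbb{E}[I_k]=\frac{P}{\kappa}\mathbb{E}[\rho_k^2]\,\mathbb{E}[\varepsilon_k]\,\mathbb{E}[\sin^2\theta_k]$. Plugging in $\mathbb{E}[\rho_k^2]=n_T$, $\mathbb{E}[\varepsilon_k]=(K-1)/(n_T-1)$ from \emph{Lemma} \ref{lemma:beta_k_pdf}, and the standard RVQ bound $\mathbb{E}[\sin^2\theta_k]\le 2^{-B/(n_T-1)}$ from \cite{Jindal06}, this term is at most $\log_2\!\big(1+cP\,2^{-B/(n_T-1)}\big)$ with $c=(K-1)n_T/(\kappa(n_T-1))$, i.e. the first term of $\Delta R$.

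For the misalignment term I would use that, under the quantization-cell approximation already adopted, $\cos^2\theta_k$ has CDF $\Pr[\cos^2\theta_k\le y]=(1-(1-y)^{n_T-1})^n$ (equivalently the known distribution of $\sin^2\theta_k$). Writing $\mathbb{E}[-\ln\cos^2\theta_k]=\int_0^\infty \Pr[\cos^2\theta_k\le e^{-t}]\,dt$, substituting $a=1-e^{-t}$, using $\tfrac{1-a^{n_T-1}}{1-a}=\sum_{j=0}^{n_T-2}a^j$, and then substituting $v=a^{n_T-1}$ in each resulting integral, one gets $\mathbb{E}[-\log_2\cos^2\theta_k]=\frac{\log_2 e}{n_T-1}\sum_{i=1}^{n_T-1}\beta\!\big(n,\tfrac{i}{n_T-1}\big)$, which is exactly the second term of $\Delta R$.

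The remaining term $\mathbb{E}[\log_2(1+\xi_k)]-\mathbb{E}[\log_2(1+\hat\xi_k)]$ is the crux, and I claim it is identically zero. Indeed $|r_{k,k}|^2$ (from $\mathbf{H}=\mathbf{RQ}$) is the squared norm of the component of $\mathbf{h}_k$ orthogonal to $\mathrm{span}\{\mathbf{h}_1,\dots,\mathbf{h}_{k-1}\}$; since $\|\mathbf{h}_k\|$ is independent of the channel directions, $|r_{k,k}|^2=\rho_k^2\,U_k$ with $U_k\sim\mathrm{Beta}(n_T-k+1,k-1)$ independent of $\rho_k$, so $\xi_k=\frac{P}{\kappa}\rho_k^2 U_k$. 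Likewise $|\hat r_{k,k}|^2$ is the squared norm of the component of the unit vector $\hat{\mathbf{h}}_k$ orthogonal to $\mathrm{span}\{\hat{\mathbf{h}}_1,\dots,\hat{\mathbf{h}}_{k-1}\}$, and under the cell approximation the RVQ codewords $\hat{\mathbf{h}}_1,\dots,\hat{\mathbf{h}}_K$ are i.i.d. isotropic and independent of $(\rho_1,\dots,\rho_K)$; hence $|\hat r_{k,k}|^2\sim\mathrm{Beta}(n_T-k+1,k-1)$ independent of $\rho_k$, so $\hat\xi_k\stackrel{d}{=}\xi_k$ and the two expectations cancel. Adding the remaining two bounds yields $\Delta R_k\le\Delta R$. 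The degenerate cases $K=1$ (no interference, $c=0$, the $\beta$-sum still valid) and $K=n_T$ ($\varepsilon_k\equiv 1$, so $\mathbb{E}[\varepsilon_k]=1=(K-1)/(n_T-1)$) are covered by the same expression. The step I expect to require the most care is this last distributional identity, since it hinges on recognizing that the perfect-CSI effective SNR and the ``no-misalignment'' SINR $\hat\xi_k$ share the same $\rho_k^2\times\mathrm{Beta}$ law, which in turn relies on the isotropy of the RVQ codewords under the cell approximation.
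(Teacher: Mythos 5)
Your proof is correct, and it reaches the paper's bound through the same overall decomposition --- the residual interference handled by Jensen's inequality together with $\mathbb{E}[\rho_k^2]=n_T$, $\mathbb{E}[\varepsilon_k]=\tfrac{K-1}{n_T-1}$ and $\mathbb{E}[\sin^2\theta_k]\le 2^{-B/(n_T-1)}$, the misalignment term $-\mathbb{E}[\log_2\cos^2\theta_k]$ producing the beta sum, and a cancellation based on $\xi_k \stackrel{d}{=} \hat\xi_k$ --- but it differs from the paper's argument in one substantive and favorable way. The paper isolates the $-\log_2\cos^2\theta_k$ term by a high-SNR approximation ($\log_2(1+x)\approx\log_2 x$ applied to both the perfect-CSI term and the interference-free quantized-CSI term), so its chain contains an ``$\approx$'' and the stated inequality is, strictly speaking, only asymptotic; your step $1+S_k \ge \cos^2\theta_k + S_k = \cos^2\theta_k(1+\hat\xi_k)$ replaces that approximation by an exact algebraic inequality, after which $\mathbb{E}[\log_2(1+\xi_k)]-\mathbb{E}[\log_2(1+\hat\xi_k)]=0$ follows from equidistribution alone, making the bound rigorous at every SNR. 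Two smaller remarks: the paper establishes $\xi_k\stackrel{d}{=}\hat\xi_k$ by observing that the norm-rescaled quantized channel matrix is distributed as $\mathbf{H}$ under RVQ, rather than by identifying the explicit $\rho_k^2\times\mathrm{Beta}(n_T-k+1,\,k-1)$ law as you do (both arguments rest on the same isotropy and norm--direction independence, and yours is correct); and the closed form you derive for $\mathbb{E}[-\log_2\cos^2\theta_k]$ uses the exact RVQ distribution function $(1-(1-y)^{n_T-1})^n$ rather than the quantization-cell approximation you attribute it to --- a mislabeling only, since your integral evaluates to exactly the expression the paper imports from its references.
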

\begin{proof}
See Appendix \ref{proof:theorem_rateloss}.
\end{proof}

According to the results in \cite[\emph{Theorem 1}]{Jindal06}, the
average sum rate loss due to quantized feeback for ZF precoding is
upper bounded by $\Delta R_{zf} < \log_2\left(1+ P ~
2^{-\frac{B}{n_T-1}} \right)$. We find the first term at the right
hand side (RHS) of (\ref{eq:sum_rate_loss}) can be approximated as
$\log_2\left(1+ P ~ 2^{-\frac{B}{n_T-1}} \right)$ with high order
constellation, large number of transmit antennas and large number of
supported users. Thus, the second term at the RHS of
(\ref{eq:sum_rate_loss}) can be seen as the sum rate degradation of
nonlinear precoding compared with that of linear precoding when only
quantized CSI is available at the transmitter side. In addition,
similar to the results for the linear ZF beamforming in
\cite{Jindal06}, the rate loss for nonlinear precoding is also an
increasing function of the system SNR ($P$). Thus, the system with
\emph{fixed} feedback rate is interference-limited at high SNR
regime, which is shown in the following theorem.
\begin{theorem}\label{theorem:sum_rate}
The average sum rate of user $k$ achieved by quantized CSI-based TH
precoding with $B$ feedback bits per user is bounded as
\begin{align}\label{eq:ave_sumrate_bound1}
R_{Q,k} \leq \log_2 e \left( \sum_{m = K-1}^{n_T-2} \sum_{l=0}^{n_T
-m-2} \frac{(n_T-2)!}{ m!~ l!~ (n_T -m-2 - l) !} (-1)^{l}
\frac{1}{m+l} + \frac{1}{n_T-1} \sum_{l=1}^{n} \frac{1}{l}  \right),
\end{align}
where $n = 2^B$ is the size of codebook.
\end{theorem}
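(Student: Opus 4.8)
The plan is to bound $R_{Q,k} = \mathbb{E}\{\log_2(1+\gamma_k)\}$ from above by dropping the $+1$ in the denominator of the SINR expression \eqref{eq:SINR_limited} and then splitting the resulting logarithm into a "useful signal" part and an "interference" part. Concretely, I would first write
\begin{equation*}
1 + \gamma_k \;\leq\; 1 + \frac{\frac{P}{\kappa}\rho_k^2 |\hat r_{k,k}|^2 \cos^2\theta_k}{\frac{P}{\kappa}\rho_k^2 \varepsilon_k \sin^2\theta_k}
\;=\; 1 + \frac{|\hat r_{k,k}|^2 \cos^2\theta_k}{\varepsilon_k \sin^2\theta_k},
\end{equation*}
and then, to decouple the terms inside the expectation, use the crude but clean bound $\log_2(1+x) \leq \log_2(1+ab)$ type manipulation — more precisely I would bound $\gamma_k$ by $|\hat r_{k,k}|^2\cos^2\theta_k /(\varepsilon_k \sin^2\theta_k)$ and use $\cos^2\theta_k \leq 1$ and $|\hat r_{k,k}|^2 \leq$ something controllable, so that
\begin{equation*}
\mathbb{E}\{\log_2(1+\gamma_k)\} \;\leq\; \mathbb{E}\left[-\log_2 \varepsilon_k\right] \;+\; \mathbb{E}\left[-\log_2 \sin^2\theta_k\right] \;+\; (\text{term from } |\hat r_{k,k}|^2\cos^2\theta_k).
\end{equation*}
The first expectation is exactly Lemma \ref{lemma:exp_interference}, which supplies the double sum $\sum_{m=K-1}^{n_T-2}\sum_{l=0}^{n_T-m-2}\cdots$ appearing in \eqref{eq:ave_sumrate_bound1}.

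The second piece is the contribution $\frac{\log_2 e}{n_T-1}\sum_{l=1}^{n}\frac{1}{l}$, which I would obtain from the statistics of $\sin^2\theta_k$ under RVQ. Recall from \cite{Jindal06} (and the quantization-cell approximation used in Section \ref{sec:model_perfect_CSI}) that $\sin^2\theta_k$ is stochastically dominated by the minimum of $n=2^B$ i.i.d. $\mathrm{Beta}(n_T-1,1)$ random variables; equivalently $\mathbb{E}[\log_2(1/\sin^2\theta_k)]$ can be evaluated via the known moment/log-moment formula for this order statistic. The identity $\sum_{l=1}^{n} 1/l$ (a harmonic number) is the telltale signature of taking the minimum of $n$ such beta variables, so the bookkeeping should reduce to the computation $\mathbb{E}[-\log_2 \sin^2\theta_k] \leq \frac{\log_2 e}{n_T-1}\sum_{l=1}^n \frac 1 l$. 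The remaining $|\hat r_{k,k}|^2\cos^2\theta_k$ factor must be shown not to contribute any positive term — I would argue $\mathbb{E}[\log(|\hat r_{k,k}|^2\cos^2\theta_k)] \leq 0$, using that $|\hat r_{k,k}|^2 = \|\hat{\mathbf h}_k \hat{\mathbf Q}_{\perp}\|^2$-type quantity is bounded by $1$ together with $\cos^2\theta_k \leq 1$, or more carefully that the relevant diagonal entry of $\hat{\mathbf R}$ after normalization is a squared norm of a projection onto a subspace and hence $\leq 1$.

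The main obstacle I anticipate is the same dependence issue flagged before Lemma \ref{lemma:beta_k_pdf}: $\tilde{\mathbf h}_k$, $\sin\theta_k$, $\hat{\mathbf Q}$, and $|\hat r_{k,k}|$ are all entangled through the shared quantized channels $\hat{\mathbf H}$, so I cannot naively factor the expectation of the product. The resolution is to do the splitting at the level of Jensen/monotonicity bounds rather than independence — i.e., bound $\log_2(1+\gamma_k)$ by a \emph{sum} of logarithms first, and only then take expectations term by term by linearity, invoking Lemmas \ref{lemma:beta_k_pdf} and \ref{lemma:exp_interference} for the $\varepsilon_k$ term and the RVQ quantization-error statistics for the $\sin^2\theta_k$ term. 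A secondary subtlety is justifying that $\cos^2\theta_k$ and $|\hat r_{k,k}|^2$ may simply be dropped (replaced by $1$) without breaking the bound; this is where I would be most careful, since $\hat r_{k,k}$ is a diagonal element of the triangular factor of $\hat{\mathbf H}$ whose rows are unit vectors, so each such diagonal entry is at most $1$ in magnitude, making the substitution legitimate.
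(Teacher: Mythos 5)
Your proposal follows essentially the same route as the paper's proof: drop the $+1$ in the SINR denominator, cancel $\frac{P}{\kappa}\rho_k^2$, rewrite $1+\gamma_k$ as the single fraction $\bigl(|\hat r_{k,k}|^2\cos^2\theta_k+\varepsilon_k\sin^2\theta_k\bigr)/\bigl(\varepsilon_k\sin^2\theta_k\bigr)$, bound the numerator by $\cos^2\theta_k+\sin^2\theta_k=1$ using $|\hat r_{k,k}|^2\le 1$ and $\varepsilon_k\le 1$, and then apply Lemma \ref{lemma:exp_interference} to $-\mathbb{E}[\log_2\varepsilon_k]$ and \cite[\emph{Lemma} 3]{Jindal06} to $-\mathbb{E}[\log_2\sin^2\theta_k]$, exactly as in Appendix \ref{proof:theorem_sum_rate}. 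The one step you should state precisely is the one you yourself flagged: the leftover term must be $\mathbb{E}\bigl[\log_2(|\hat r_{k,k}|^2\cos^2\theta_k+\varepsilon_k\sin^2\theta_k)\bigr]\le 0$ (via $A+B\le 1$), not a bound of the form $\log_2(1+A/B)\le\log_2(A/B)$, which goes in the wrong direction; with that nailed down your argument coincides with the paper's.
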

\begin{proof}
See Appendix \ref{proof:theorem_sum_rate}.
\end{proof}
We can see from this theorem that, with fixed feedback bits per
user, as the interference and signal power both increase linearly
with $P$, the system becomes interference-limited and the average
sum rate converges to an upper bound. These can also be observed
from the simulation results in Fig. \ref{fig:sum_rate_SNR}.

In the context of linear ZF precoding in \cite{Jindal06}, the author
showed the interference-limited scenario can be avoided by scaling
the feedback rate linearly with the SNR $P_{\text{dB}}$ (in
decibels). Particularly, it is showed in \cite[\emph{Theorem}
3]{Jindal06} that in order to maintain a constant average sum rate
loss no greater than $\log_2 b$ bits per user between the system
with perfect CSI and the system with finite-rate feedback, it is
sufficient to scale the number of feedback bits per user according
to
\begin{align}\label{eq:scale_linear}
B = (n_T-1) \frac{\log_2 10}{10} P_{\text{dB}} - (n_T -1)
\log_2(b-1).
\end{align}
However, for nonlinear TH precoding, the explicit relationship
between the feedback rate and the SNR to maintain a constant average
sum rate loss cannot be easily obtained. This is mainly due to the
fact that the expression of average sum rate loss in
(\ref{eq:sum_rate_loss}) is a much more complex function of $n$
($B$) than the corresponding expression for linear ZF precoding
given in \cite[\emph{Theorem} 1]{Jindal06}. In the following we will
derive a corresponding relationship for the system employing TH
precoding. First, in Appendix \ref{proof} we show that the second
term at the RHS of (\ref{eq:sum_rate_loss}) can be bounded by a
decreasing function of $n$ for a fixed $n_T$. In addition, this
upper bound approaches zero as $n \rightarrow \infty$. Thus, as $n$
scales linearly with SNR (in decibels), for an arbitrary given
constant $ 0 < \varepsilon < 1$, we can always find a positive
integer $N(\varepsilon)$ such that, whenever $n \geq
N(\varepsilon)$,
\begin{align}\label{eq:scale_bound1}
\Delta R \leq \log_2\left( 1+  c P ~ 2^{-\frac{B}{n_T-1}} \right)
+\varepsilon.
\end{align}
To characterize a sufficient condition of the scaling of feedback
rate, we set the RHS of (\ref{eq:scale_bound1}) to be the maximum
allowable gap of $\log_2 b$. After some simple manipulations, we get
\begin{align}\label{eq:scale_bound}
B & = (n_T - 1 ) \log_2 P - \log_2(b -2^{\varepsilon} -1 ) + \log_2 c \nonumber\\
& = (n_T - 1 )  \frac{\log_2 10}{10} P_{\text{dB}} - \log_2(b
-2^{\varepsilon} -1 ) + \log_2 c.
\end{align}

\begin{figure}[t]
\centering
\includegraphics[width= 0.8\columnwidth]{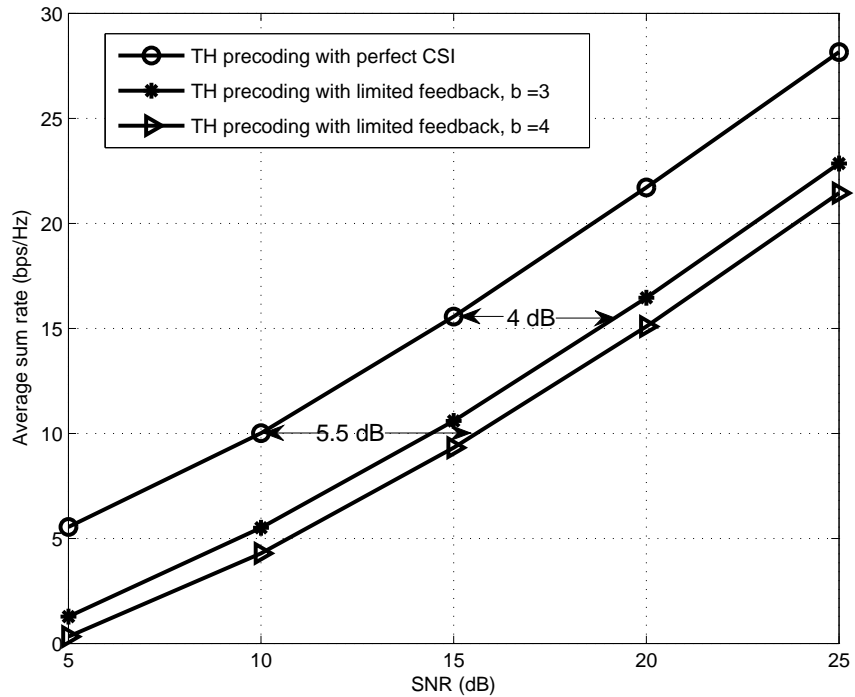}
\caption{$4 \times 4 $ system with increasing number of feedback
bits.} \label{fig:increasing_bits}
\end{figure}

In Fig. \ref{fig:increasing_bits}, the average sum rate curves are
shown for a system with $n_T =4$ and $K=4$. The feedback rate is
assumed to scale according to the relationship given in
(\ref{eq:scale_bound}). Notice that, since $\varepsilon$ can be set
to be a small number when $B$ is large enough, in the simulation we
set $\varepsilon = 0$ to get a \emph{stronger} condition than
(\ref{eq:scale_bound}). Quantized CSI-based TH precoding is seen to perform within
around $4$ dB and $5.5$ dB of TH precoding with perfect CSI for $b =3$
and $b=4$ respectively.

\begin{figure}[t]
\centering
\includegraphics[width= 0.8\columnwidth]{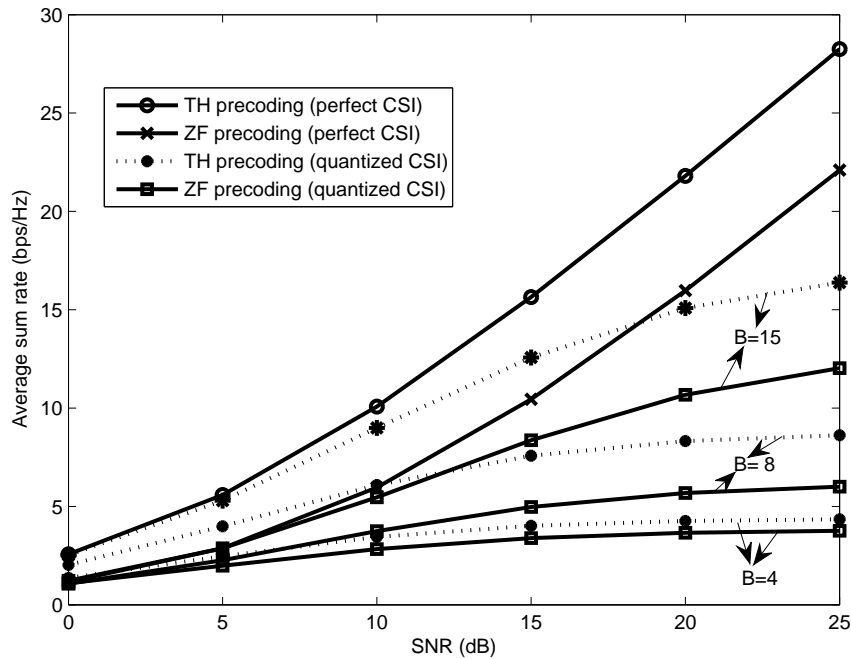}
\caption{The average sum rate performance of TH precoding and ZF
precoding for both perfect CSI and quantized CSI.}
\label{fig:sum_rate_SNR}
\end{figure}

\begin{figure}[t]
\centering
\includegraphics[width= 0.8 \columnwidth]{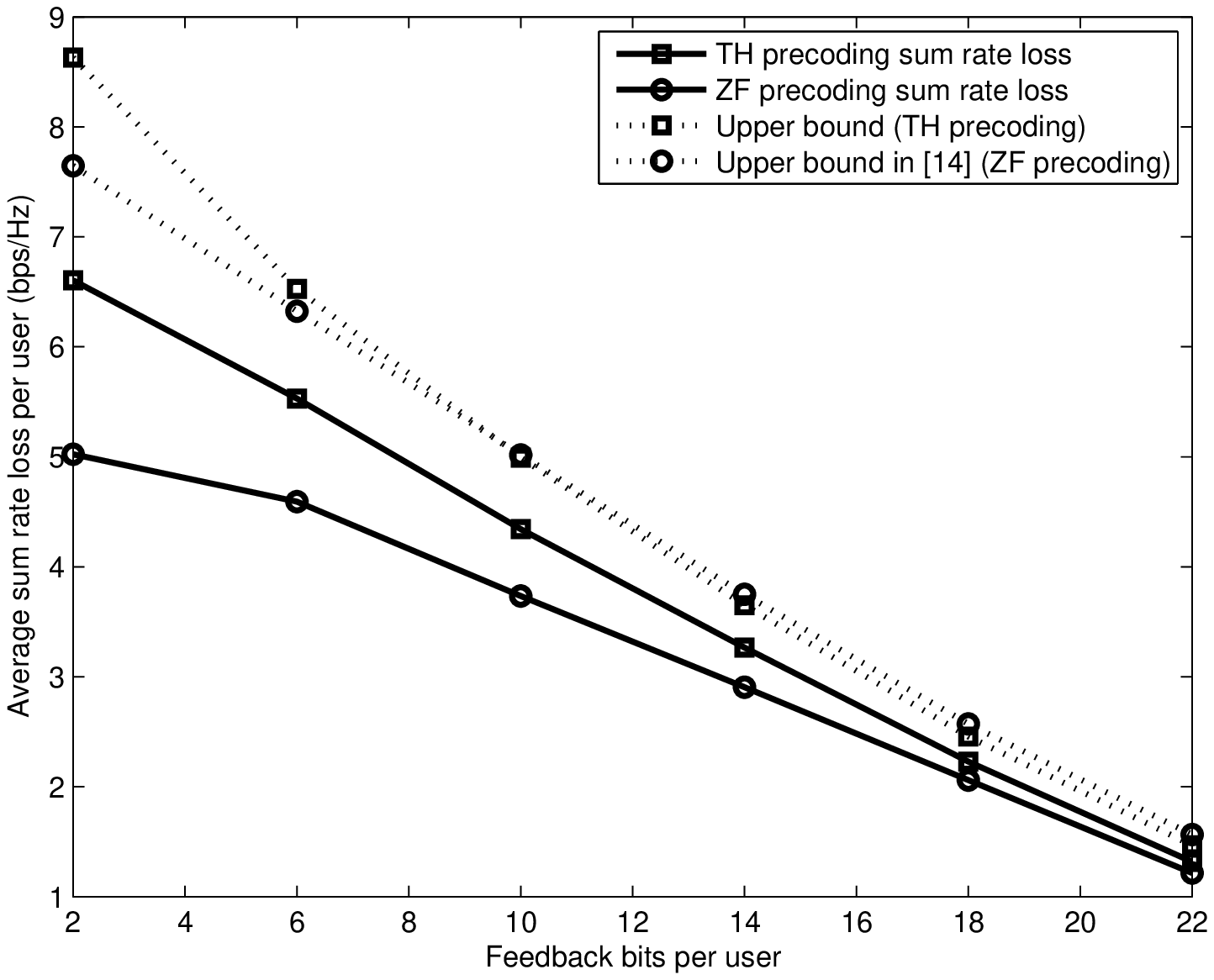}
\caption{The average sum rate loss per user and corresponding upper
bounds against the number of feedback bits. $P= 25$ dB.}
\label{fig:sum_rate_loss}
\end{figure}

\section{Numerical Results}
In this section we present some numerical results. We assume $n_T =
K =4$. Here the SNR of the systems is defined to be equal to $P$.

Fig. \ref{fig:sum_rate_SNR} shows the average sum rate performance
of TH precoding and linear ZF precoding with both perfect CSI and
quantized CSI, and $4, 8$ and $15$ feedback bits per user. We can
see TH precoding performs better than linear precoding in both
perfect CSI and quantized CSI cases. When the SNR is small and
moderate, the average sum rate achieved by quantized CSI-based TH
precoding can even be better than that of perfect CSI-based linear
ZF precoding.

Fig. \ref{fig:sum_rate_loss} plots the average sum rate loss per
user as a function of the number of feedback bits for both ZF
precoding and TH precoding in a system at an SNR of $25$ dB. We also
plot the upper bound from \emph{Theorem} \ref{theorem:rate_loss} in
this paper and the upper bound from \emph{Theorem} $1$ in
\cite{Jindal06}. From the figure we can see that nonlinear precoding
suffers from imperfect CSI more than linear precoding does. However,
the performance of nonlinear precoding can still be better than
linear precoding when SNR is not large or the feedback quantization
resolution is high enough. In addition, we notice that the upper
bound for TH precoding tracks the true rate loss quite closely, and
appears to converge faster than the upper bound for linear precoding
obtained in \cite{Jindal06} as $B$ increases.

\section{Conclusion}
In this paper, we have investigated the implementation of TH
precoding in the downlink multiuser MIMO systems with quantized CSI
at the transmitter side. In particular, our scheme generalized the
results in \cite{Windpassinger04} to more general system setting
where the number of users $K$ in the systems can be less than or
equal to the number of transmit antennas $n_T$. In addition, we
studied the achievable average sum rate of the proposed scheme by
deriving expressions of upper bounds on both the average sum rate
and the mean loss in sum rate due to CSI quantization. Our numerical
results showed that the nonlinear TH precoding could achieve much
better performance than that of linear zero-forcing precoding for
both perfect CSI and quantized CSI cases. In addition, our derived
upper bound for TH precoding converged to the true rate loss faster
than the upper bound for zero-forcing precoding obtained in
\cite{Jindal06} as the number of feedback bits increased.

\appendices

\section{Proof of \emph{Lemma} \ref{lemma:beta_k_pdf}}
\label{proof:lemma_beta_k_pdf} The results for the special cases
that $K=1$ and $n_T$ are trivial. In the following we will consider
the cases that $1 < K < n_T$. Since the user channel vectors in
$\mathbf{H} $ are unordered, so are the quantized channel direction
vectors in $\hat{\mathbf{H}} = \left [\hat{\mathbf{h}}_1^T, \cdots
,\hat{\mathbf{h}}_K^T \right ]^T $. According to the QR
decomposition of $\hat{\mathbf{H}}$ we have
\begin{equation}
\hat{\mathbf{h}}_k = \sum_{l=1}^{k} \hat{r}_{k,l}
\hat{\mathbf{q}}_l,
\end{equation}
If we require $\hat{r}_{i,i} > 0 $ for $i =1,\cdots, K$, this
decomposition is \emph{unique}. Particularly, we have $\hat{r}_{1,1}
= 1$ and $ \hat{\mathbf{q}}_1 =\hat{\mathbf{h}}_1 $. In addition,
$\tilde{\mathbf{h}}_k$ is isotropically distributed in the null
space of $\hat{\mathbf{h}}_k$\cite{Jindal06}. Thus, for $k=1$ we
have $\tilde{\mathbf{h}}_1 \bot ~ \hat{\mathbf{q}}_1$ or
equivalently $\tilde{\mathbf{h}}_1$ is an isotropically distributed
unit vector in the null space of $\hat{\mathbf{q}}_1$.

With the assumption of RVQ, the quantized channel direction vectors
$\hat{\mathbf{h}}_k ( k =1, \cdots, K)$ are independently and
isotropically distributed on the $n_T$--dimensional complex unit
sphere due to the assumption of i.i.d. Rayleigh fading. Thus we can
conclude that the orthonormal basis $\hat{\mathbf{q}}_1, \cdots,
\hat{\mathbf{q}}_K$ of the subspace spanned by quantized channel
vectors $\hat{\mathbf{h}}_k ( k =1, \cdots, K)$ have no preference
of direction, i.e., $\left [\hat{\mathbf{q}}_1^T, \cdots,
\hat{\mathbf{q}}_K^T \right]^T$ is isotropically distributed in the
$K \times n_T$ semi-unitary space. Thus, to derive the distribution
of $\varepsilon_1 = \| \tilde{\mathbf{h}}_1 \hat{\mathbf{Q}}^H
\|^2$, we can assume $\hat{\mathbf{q}}_i= \mathbf{e}_i$ for $i =1,
\cdots, K $ without loss of generality, where $\mathbf{e}_i$ is the
$i$-th row of the identity matrix $\mathbf{I}_{n_T}$. Recall that
$\tilde{{\mathbf{h}}}_1 \bot \hat{\mathbf{q}}_1 $, thus the random
vector $\tilde{{\mathbf{h}}}_1$ can be written in the form of
$\tilde{{\mathbf{h}}}_1 = [0, \mathbf{v}]$, where the vector
$\mathbf{v} = [v_1, v_2, \cdots, v_{n_T -1}]$ is isotropically
distributed on the $\left(n_T-1\right)$--dimensional complex unit
sphere. Then $\varepsilon_1 = \| \tilde{\mathbf{h}}_1
\hat{\mathbf{Q}}^H \|^2 = \sum_{l=1}^{K-1} |v_l|^2$. Let $t_l =
|v_l|^2$. It has been obtained in \cite{L.sun_multiuser} that the
joint p.d.f. of $t_1,\cdots,t_{K-1}$ is
\begin{eqnarray}\label{eq:joint_density}
f (t_1,\ldots,t_{K-1}) = \left\{ \begin{array}{ll}
\frac{\Gamma(n_T-1)}{\Gamma(n_T-K)} \left(1 - \sum_{i=1}^{K-1}t_i
\right)^{n_T-K-1}, & t_i \geq 0~ \text{for} ~ i= 1,\cdots, K-1,
\sum_{i=1}^{K-1} t_i=1\\
0, &\text{otherwise}
\end{array}
\right.. \nonumber
\end{eqnarray}

Now we want to obtain the distribution of $u_1= \sum_{l=1}^{K-1}
t_l$. We define the following transformation of variables
\begin{align}
u_1 =\sum_{l=1}^{K-1} t_l, ~~u_i=t_i ~~~\text{for}~ i =2, \cdots,
K-1. \nonumber
\end{align}
It is easy to obtain the corresponding Jacobian is $J = 1$. Thus the
joint p.d.f. of $u_1, \cdots, u_{K-1}$ is
\begin{equation}
f_{u_1,\ldots,u_{K-1}}\left( x_1, \cdots, x_{K-1}\right) =
\frac{\Gamma(n_T-1)}{\Gamma(n_T-K)} \left(1 - x_1 \right)^{n_T-K-1}.
\end{equation}
Since $ 0 \leq t_i \leq 1$, we have $0 \leq t_1 = u_1 -
\sum_{i=2}^{K-1}u_i \leq 1$. The region of the random variables
after transformation can be obtained as $\mathcal{D} = \big\{ \left(
u_1, \cdots, u_{K-1} \right) ~|~ 0\leq \sum_{l=2}^{K-1}u_l\leq u_1
\leq 1,  0 \leq u_i \leq 1 ~\text{for} ~i = 2, \cdots, K-1 \big\}$.
Then the marginal distribution of $u_1$ can be obtained as
\begin{eqnarray}
f_{u_1} (x) &&= \int \cdots \int_{\mathcal{D}} f (x,x_2
\ldots,x_{K-1}) ~{\rm d}x_2 \cdots{\rm d}x_{K-1} \nonumber \\
&& = \int \cdots \int_{\mathcal{D}}
\frac{\Gamma(n_T-1)}{\Gamma(n_T-K)} \left(1 - x_1 \right)^{n_T-K-1}
~{\rm d}x_2 \cdots{\rm d}x_{K-1} \nonumber\\
&& \mathop = \limits^{(a)} \frac{\Gamma(n_T-1)}{\Gamma(n_T-K)} (1-
x)^{n_T-K-1} \frac{x^{K-2}}{\left(K-2\right) !} \nonumber
\end{eqnarray}
which is given by (\ref{eq:pdf_interfers}), where in (a) we have
used the identity $\mathop {\int \int {  \cdots \int {} }
}\limits_{\scriptstyle \sum_{i=1}^{n}t_i \leq h \hfill \atop
\scriptstyle t_1 \geq 0, \cdots, t_{n} \geq 0  \hfill} ~{\rm d} t_1
\cdots ~{\rm d} t_n = \frac{h^n}{n!} $ \cite{Gradshteyn2000}. We
find that $\varepsilon_1 = u_1$ follows beta distribution with shape
$(K-1)$ and $(n_T-K)$. In the following we will prove
$\varepsilon_k$s have the same distribution.

Let $\boldsymbol{\pi}$ be an arbitrary and channel-independent
permutation of $(1,2,\cdots,K)$. $\mathbf{P}_{\boldsymbol{\pi}} =
\big[\mathbf{1}_{\boldsymbol{\pi}(1)}, \cdots, \\
\mathbf{1}_{\boldsymbol{\pi}(K)} \big]^{T}$ is the permutation
matrix corresponding to $\boldsymbol{\pi}$ and
$\mathbf{1}_{\boldsymbol{\pi}(i)}$ is the $\boldsymbol{\pi}(i)$-th
column of identity matrix. We denote
$\hat{\mathbf{H}}_{\boldsymbol{\pi}} =
\mathbf{P}_{\boldsymbol{\pi}}\hat{\mathbf{H}} = \left [
\mathbf{h}_{\boldsymbol{\pi}(1)}^T,\cdots,\mathbf{h}_{\boldsymbol{\pi}(K)}^T\right]^T$
the matrix obtained by permutating the row vector of matrix
$\hat{\mathbf{H}}$ according to the permutation $\boldsymbol{\pi}$.
Then the QR decomposition of $\hat{\mathbf{H}}_{\boldsymbol{\pi}}$
can be written as $\hat{\mathbf{H}}_{\boldsymbol{\pi}}
=\mathbf{P}_{\boldsymbol{\pi}} \hat{\mathbf{R}} \hat{\mathbf{Q}}=
\hat{\mathbf{R}}_{\boldsymbol{\pi}}
\hat{\mathbf{Q}}_{\boldsymbol{\pi}}$. With the assumption that
$\hat{\mathbf{R}}_{\boldsymbol{\pi}}$ has positive diagonal
elements, the above QR decomposition of
$\hat{\mathbf{H}}_{\boldsymbol{\pi}}$ is unique. Using Givens
transformation, there is a series of Givens matrices $\mathbf{G}_1,
\cdots, \mathbf{G}_{K-1} \in \mathbb{C}^{K\times K} $ which satisfy
$\mathbf{P}_{\boldsymbol{\pi}} \hat{\mathbf{R}} \mathbf{G}_1 \cdots
\mathbf{G}_{K-1}= \bar{\mathbf{R}}_{\boldsymbol{\pi}}$
\cite{matrix_com}, where $\bar{\mathbf{R}}_{\boldsymbol{\pi}} \in
\mathbb{C}^{K\times K} $ is a lower triangular matrix with positive
diagonal elements. Since Givens matrix is unitary, we have
$\mathbf{G}_1 \cdots \mathbf{G}_{K-1}\mathbf{G}_{K-1}^H \cdots
\mathbf{G}_1^H= \mathbf{I}$. So $\hat{\mathbf{H}}_{\boldsymbol{\pi}}
$ can be written as $\hat{\mathbf{H}}_{\boldsymbol{\pi}} =
\bar{\mathbf{R}}_{\boldsymbol{\pi}} \mathbf{G}_{K-1}^H \cdots
\mathbf{G}_1^H \hat{\mathbf{Q}} $. Let
$\bar{{\mathbf{Q}}}_{\boldsymbol{\pi}} = \mathbf{G}_{K-1}^H \cdots
\mathbf{G}_1^H \hat{\mathbf{Q}} $. Then we have
$\hat{\mathbf{H}}_{\boldsymbol{\pi}}
=\bar{\mathbf{R}}_{\boldsymbol{\pi}}
\bar{\mathbf{Q}}_{\boldsymbol{\pi}} $ where
$\bar{{\mathbf{Q}}}_{\boldsymbol{\pi}} $ is unitary. Thus
$\hat{\mathbf{H}}_{\boldsymbol{\pi}}
=\bar{\mathbf{R}}_{\boldsymbol{\pi}}
\bar{\mathbf{Q}}_{\boldsymbol{\pi}} $ is also a QR decomposition of
$\hat{\mathbf{H}}_{\boldsymbol{\pi}}$. Using the uniqueness of QR
decomposition, we conclude that $\bar{\mathbf{Q}}_{\boldsymbol{\pi}}
= \hat{\mathbf{Q}}_{\boldsymbol{\pi}}$ and
$\bar{\mathbf{R}}_{\boldsymbol{\pi}}
=\hat{\mathbf{R}}_{\boldsymbol{\pi}}$. Thus we have
\begin{align}
\varepsilon_k &= \| \tilde{\mathbf{h}}_k \hat{\mathbf{Q}}^H \|^2
\nonumber\\
& = \| \tilde{\mathbf{h}}_k \hat{\mathbf{Q}}_{\boldsymbol{\pi}}^H
\mathbf{G}_{K-1}^H  \cdots \mathbf{G}_{1}^H\|^2 \nonumber\\
& \mathop = \limits^{(b)} \| \tilde{\mathbf{h}}_k
 \bar{\mathbf{Q}}_{\boldsymbol{\pi}}^H \|^2,
\end{align}
where $(b)$ is due to the fact that the matrix $\mathbf{G}_{i} $ is
unitary for $i =1, \cdots, K-1$. If we let $\boldsymbol{\pi}(1) =
k$, $\hat{\mathbf{h}}_{k}$ will be the first row of
$\hat{\mathbf{H}}_{\boldsymbol{\pi}}$. According to the previous
derivation in the proof, we know $\varepsilon_k$ for $k=2, \cdots,
K-1$ have the same distribution as $\varepsilon_1$ whose p.d.f. is
give by (\ref{eq:pdf_interfers}).

\section{Proof of \emph{Lemma}
\ref{lemma:exp_interference}}\label{sec:proof_exp_interference}

Let $Y =  \| \tilde{\mathbf{h}}_k \hat{\mathbf{Q}}^H \|^2$. As shown
in \emph{Lemma} \ref{lemma:beta_k_pdf}, $Y$ follows the beta
distribution with shape $(K-1)$ and $(n_T-K)$ and the cumulative
distribution function (c.d.f.) is given by $\text{Pr}\left( Y \leq y
\right) =  I_{x}\left(K-1, n_T - K \right)$, where $I_{x}\left(
\cdot, \cdot \right)$ is the regularized incomplete beta function.
Using the facts that
\begin{align}
I_{x}\left( a, b \right) = \sum_{m = a}^{a+b-1}
\frac{(a+b-1)!}{m!(a+b-m-1)!} x^{m}(1-x)^{a+b-m-1},
\end{align}
$\mathbb{E}[X] = \int_{0}^{\infty} \text{Pr}\left( X \geq x \right)
{\rm d}x $ for nonnegative random variables and binomial expansion,
we have
\begin{align}
\mathbb{E} \left [ - \ln Y \right ] & = \int_{0}^{\infty}
\text{Pr}\left( Y \leq e^{-x} \right) {\rm d} x
\nonumber\\
& = \int_{0}^{\infty} I_{e^{-x}} (K-1,n_T -K) {\rm d} x \nonumber\\
& = \int_{0}^{\infty}  \left\{ \sum_{m = K-1}^{n_T-2}
\frac{(n_T-2)!}{m!(n_T-2-m)!} e^{-mx}(1-e^{-x})^{n_T-2-m}
\right\}{\rm d} x  \nonumber\\
& = \int_{0}^{\infty} \left\{ \sum_{m = K-1}^{n_T-2}
\frac{(n_T-2)!}{m!(n_T-2-m)!} e^{-mx} \sum_{l=0}^{n_T -m-2}
\binom{n_T -m-2}{l} (-1)^{l} e^{- l x} \right\} {\rm d} x
\nonumber\\
&=  \sum_{m = K-1}^{n_T-2} \sum_{l=0}^{n_T -m-2}
\frac{(n_T-2)!}{m!(n_T-2-m)!}\binom{n_T -m-2}{l} (-1)^{l}
\frac{1}{m+l} \nonumber\\
&=  \sum_{m = K-1}^{n_T-2} \sum_{l=0}^{n_T -m-2} \frac{(n_T-2)!}{
m!~ l!~ (n_T -m-2 - l) !} (-1)^{l} \frac{1}{m+l}.
\end{align}
Thus (\ref{eq:exp_interference}) is proved.

\section{Proof of \emph{Theorem}
\ref{theorem:rate_loss}}\label{proof:theorem_rateloss} First we will
prove the fact that $|r_{k,k}|^2$ and $\rho_k^2 |\hat{r}_{k,k}|^2$
have the same distribution. Let the QR decomposition of matrix
$\check{\mathbf{H}} = \mathbf{\Phi} \hat{\mathbf{H}}$ be
$\check{\mathbf{Q}}\check{\mathbf{R}}$. It is easy to see
$|\check{r}_{k,k} |^2 = \rho_k^2 |\hat{r}_{k,k}|^2$, where
$\check{r}_{k,k}$ is the $k$-th diagonal element of
$\check{\mathbf{R}}$. Since we assume using RVQ,
$\check{\mathbf{H}}$ has the same distribution as $\mathbf{H}$. Thus
$\rho_k^2 |\hat{r}_{k,k}|^2$ has the same distribution as
$|r_{k,k}|^2$.

Using (\ref{eq:SNR_per}), (\ref{eq:SINR_limited})
(\ref{eq:rate_per}) and (\ref{eq:rate_limit}), we can write
\begin{align}
\Delta R_{k} & = \mathbb{E}_{\mathbf{H}, \mathbb{W}}  \left \{
\log_2\left(1+ \frac{P}{\kappa} |r_{k,k}|^2 \right) - \log_2\left(1
+ \frac{\frac{P}{\kappa} \rho_k^2|\hat{r}_{k,k}|^2 \cos^2\theta_k}{
\frac{P}{\kappa} \rho_k^2 \| \tilde{\mathbf{h}}_k \hat{\mathbf{Q}}^H
\|^2 \sin^2\theta_k+1 }
\right) \right \}\nonumber\\
\label{eq:rate_loss_4} &= \mathbb{E}_{\mathbf{H}, \mathbb{W}}  \left
\{ \log_2\left(1+ \frac{P}{\kappa} |r_{k,k}|^2 \right) \right\} -
\mathbb{E}_{\mathbf{H}, \mathbb{W}}  \left \{ \log_2 \left(
\frac{P}{\kappa}\rho_k^2 \left ( |\hat{r}_{k,k}|^2 \cos^2\theta_k +
\| \tilde{\mathbf{h}}_k \hat{\mathbf{Q}}^H \|^2 \sin^2\theta_k
\right ) +1 \right) \right \} \nonumber\\
& \hspace{2cm} + \mathbb{E}_{\mathbf{H}, \mathbb{W}}  \left \{
\log_2 \left( \frac{P}{\kappa} \rho_k^2 \| \tilde{\mathbf{h}}_k
\hat{\mathbf{Q}}^H \|^2
\sin^2\theta_k+1 \right)\right\} \\
\label{eq:rate_loss_1} & \leq \mathbb{E}_{\mathbf{H}, \mathbb{W}}
\left \{ \log_2\left(1+ \frac{P}{\kappa} |r_{k,k}|^2 \right)
\right\} - \mathbb{E}_{\mathbf{H}, \mathbb{W}}  \left \{ \log_2
\left( 1+ \frac{P}{\kappa}\rho_k^2 |\hat{r}_{k,k}|^2 \cos^2\theta_k
\right)
\right \}\nonumber\\
 & \hspace{2cm} + \mathbb{E}_{\mathbf{H}, \mathbb{W}}  \left \{  \log_2
\left(1+ \frac{P}{\kappa} \rho_k^2 \| \tilde{\mathbf{h}}_k
\hat{\mathbf{Q}}^H \|^2
\sin^2\theta_k \right)\right\} \\
\label{eq:rate_loss_2} &\approx - \mathbb{E}_{\mathbf{H},
\mathbb{W}}  \left \{ \log_2\left( \cos^2\theta_k \right)\right \} +
\mathbb{E}_{\mathbf{H}, \mathbb{W}}  \left \{  \log_2 \left( 1+
\frac{P}{\kappa} \rho_k^2 \| \tilde{\mathbf{h}}_k \hat{\mathbf{Q}}^H
\|^2 \sin^2\theta_k \right)\right\} \\
\label{eq:rate_loss_3} & \leq \log_2 \left( 1+ \frac{P}{\kappa}
\mathbb{E}_{\mathbf{H}, \mathbb{W}}  \left \{ \rho_k^2 \|
\tilde{\mathbf{h}}_k \hat{\mathbf{Q}}^H \|^2 \sin^2\theta_k \right\}
\right) - \mathbb{E}_{\mathbf{H}, \mathbb{W}} \left \{ \log_2\left(
\cos^2\theta_k \right)\right \}.
\end{align}
Here (\ref{eq:rate_loss_1}) holds by eliminating the non-negative
terms $\| \tilde{\mathbf{h}}_k \hat{\mathbf{Q}}^H \|^2
\sin^2\theta_k$ in the second term of (\ref{eq:rate_loss_4}).
(\ref{eq:rate_loss_2}) follows by using high SNR approximation and
the fact that $|r_{k,k}|^2$ and $\rho_k^2 |\hat{r}_{k,k}|^2$ have
the same distribution which has been proved above.
(\ref{eq:rate_loss_3}) follows by applying Jensen's inequality.

Since the norm of the channel vector $\rho_k$ and the direction of
channel vector $\bar{\mathbf{h}}_k$ are independent and
$\sin^2\theta_k$ and $\tilde{\mathbf{h}}_k$ ($\hat{\mathbf{h}}_k$)
are also independent with each other \cite{Jindal06}, we have
\begin{align}\label{eq:expectation}
\mathbb{E}_{\mathbf{H}, \mathbb{W}}  \left \{ \rho_k^2 \|
\tilde{\mathbf{h}}_k \hat{\mathbf{Q}}^H \|^2 \sin^2\theta_k
\right\}& = \mathbb{E}_{\mathbf{H}, \mathbb{W}} \left \{ \rho_k^2
\right\} \mathbb{E}_{\mathbf{H}, \mathbb{W}} \left \{\|
\tilde{\mathbf{h}}_k \hat{\mathbf{Q}}^H \|^2  \right\}
\mathbb{E}_{\mathbf{H}, \mathbb{W}} \left \{ \sin^2\theta_k
\right\}.
\end{align}
Each term of right hand side of (\ref{eq:expectation}) can be
obtained respectively as follows.
\begin{equation}\label{eq:expectation1}
\mathbb{E}_{\mathbf{H}, \mathbb{W}} \left \{ \rho_k^2 \right\}
=\mathbb{E}_{\mathbf{H}, \mathbb{W}} \left \{ \chi^2_{2 n_T}
\right\}  =  n_T,
\end{equation}
\begin{equation}\label{eq:expectation2}
\mathbb{E}_{\mathbf{H}, \mathbb{W}} \left \{\| \tilde{\mathbf{h}}_k
\hat{\mathbf{Q}}^H \|^2 \right\} = \mathbb{E}_{\mathbf{H},
\mathbb{W}}  \left( \text{Beta}\left(K-1,n_T-1\right) \right) =
\frac{K-1}{n_T-1},
\end{equation}
\begin{equation}
\label{eq:expectation3} \mathbb{E}_{\mathbf{H}, \mathbb{W}} \left
\{\sin^2\theta_k \right\} \leq  2^{- \frac{B}{M-1}},
\end{equation}
where (\ref{eq:expectation2}) can be easily obtained by using p.d.f.
result in (\ref{eq:pdf_interfers}) and (\ref{eq:expectation3}) is
given in \cite{Au-Yeung07} and\cite[\emph{Lemma} 1]{Jindal06}
respectively. In \cite{Bhagavatula11_adaptive}, the second term in
(\ref{eq:rate_loss_3}) was obtained as
\begin{align}\label{eq:rate_loss_2ndterm0}
\mathbb{E}_{\mathbf{H}, \mathbb{W}}  \left \{ \log_2\left(
\cos^2\theta_k \right)\right \} = \log_2(e) \sum_{i=1}^{n}
\binom{n}{i}(-1)^{i}\sum_{l=1}^{i\left( n_T -1 \right)} \frac{1}{l},
\end{align}
and (\ref{eq:rate_loss_2ndterm0}) is rewritten in
\cite{Bhagavatula11_bit} as
\begin{align}\label{eq:rate_loss_2ndterm}
\mathbb{E}_{\mathbf{H}, \mathbb{W}}  \left \{ \log_2\left(
\cos^2\theta_k \right)\right \} = - \frac{\log_2(e)}{n_T-1}
\sum_{i=1}^{n_T-1} \beta\left( n, \frac{i}{n_T-1}\right).
\end{align}
The final result follows by combing
(\ref{eq:expectation})--(\ref{eq:rate_loss_2ndterm}) .

%

\section{Proof of \emph{Theorem}
\ref{theorem:sum_rate}}\label{proof:theorem_sum_rate}

The average sum rate for user $k$ can be upper bounded as
\begin{align}
\mathbb{E}_{\mathbf{H}, \mathbb{W}} \left\{ R_{Q,k} \right\} &
=\mathbb{E}_{\mathbf{H}, \mathbb{W}}  \left\{ \log_2(1 +
\gamma_k)\right\} \nonumber \\
& =\mathbb{E}_{\mathbf{H}, \mathbb{W}}  \left\{ \log_2 \left(1 +
\frac{\frac{P}{\kappa} \rho_k^2|\hat{r}_{k,k}|^2 \cos^2\theta_k}{
\frac{P}{\kappa} \rho_k^2 \| \tilde{\mathbf{h}}_k \hat{\mathbf{Q}}^H
\|^2 \sin^2\theta_k+1 } \right)\right\} \nonumber \\
& \leq \mathbb{E}_{\mathbf{H}, \mathbb{W}}  \left\{ \log_2 \left(1 +
\frac{|\hat{r}_{k,k}|^2 \cos^2\theta_k}{ \| \tilde{\mathbf{h}}_k
\hat{\mathbf{Q}}^H
\|^2 \sin^2\theta_k } \right)\right\} \nonumber \\
& = \mathbb{E}_{\mathbf{H}, \mathbb{W}}  \left\{ \log_2 \left(
\frac{|\hat{r}_{k,k}|^2 \cos^2\theta_k + \| \tilde{\mathbf{h}}_k
\hat{\mathbf{Q}}^H \|^2 \sin^2\theta_k }{ \| \tilde{\mathbf{h}}_k
\hat{\mathbf{Q}}^H
\|^2 \sin^2\theta_k } \right)\right\} \nonumber \\
\label{eq:ave_sumrate1} & \leq  - \mathbb{E}_{\mathbf{H},
\mathbb{W}}  \left\{ \log_2 \left(
 \| \tilde{\mathbf{h}}_k
\hat{\mathbf{Q}}^H \|^2 \sin^2\theta_k \right)\right\},
\end{align}
where (\ref{eq:ave_sumrate1}) is obtained using the facts that
$|\hat{r}_{k,k}|^2 \leq 1 $ and $ \| \tilde{\mathbf{h}}_k
\hat{\mathbf{Q}}^H \|^2 \leq 1$. By using \cite[\emph{Lemma}
3]{Jindal06} and (\ref{eq:expectation2}), $\mathbb{E}_{\mathbf{H},
\mathbb{W}} \left\{ R_{Q,k} \right\} $ can be upper bounded as
\begin{align}
\mathbb{E}_{\mathbf{H}, \mathbb{W}} \left\{ R_{Q,k} \right\} & \leq
- \mathbb{E}_{\mathbf{H}, \mathbb{W}}  \left\{ \log_2 \left(
 \| \tilde{\mathbf{h}}_k
\hat{\mathbf{Q}}^H \|^2 \right)\right\} - \mathbb{E}_{\mathbf{H},
\mathbb{W}}  \left\{ \log_2 \left( \sin^2\theta_k \right)\right\}.
\end{align}
Then (\ref{eq:ave_sumrate_bound1}) follows by using \emph{Lemma}
\ref{lemma:exp_interference} and \cite[\emph{Lemma} 3]{Jindal06}.

\section{The Proof That the RHS of (\ref{eq:sum_rate_loss}) Can Be
Bounded by a Decreasing Function of $n$ for a Fixed
$n_T$}\label{proof}

Let $\mathcal{J} : = \frac{\log_2(e)}{n_T-1} \sum_{i=1}^{n_T-1}
\beta\left( n, \frac{i}{n_T-1}\right) $. $\beta\left( n,
\frac{i}{n_T-1}\right)$ can be written as
\begin{align}
\beta\left( n, \frac{i}{n_T-1}\right) = \frac{\Gamma(n)
\Gamma(\frac{i}{n_T-1}) }{\Gamma\left( n+ \frac{i}{n_T-1}\right)}.
\end{align}
By applying Kershaw's inequality for the gamma function
\cite{Kershaw},
\begin{align}
\frac{\Gamma(x+ s) }{\Gamma\left(x+1\right)} < \left(x + \frac{s}{2}
\right)^{s-1} , \forall x> 0, 0 \leq s \leq 1.
\end{align}
With $x = n - 1 + \frac{i}{n_T-1}$ and $s = 1 -  \frac{i}{n_T-1}$,
we have
\begin{align}
\frac{\Gamma(n)}{\Gamma\left( n+ \frac{i}{n_T-1}\right)} & \leq
\left( n - \frac{1}{2} + \frac{i}{2(n_T-1)} \right)^{-
\frac{i}{n_T-1}}\leq \left(
 n - \frac{1}{2} \right)^{- \frac{i}{n_T-1}}.
\end{align}
Thus, $\mathcal{J}$ can be upper bounded as
\begin{align}
\mathcal{J} \leq \sum_{i=1}^{n_T-1} \Gamma(\frac{i}{n_T-1}) \left(
 n - \frac{1}{2} \right)^{- \frac{i}{n_T-1}},
\end{align}
where the RHS is a decreasing function of $n$ for a fixed $n_T$.

\bibliographystyle{IEEEtran}
\bibliography{sunliang_bib}

\begin{thebibliography}{10}
\providecommand{\url}[1]{#1}
\csname url@rmstyle\endcsname
\providecommand{\newblock}{\relax}
\providecommand{\bibinfo}[2]{#2}
\providecommand\BIBentrySTDinterwordspacing{\spaceskip=0pt\relax}
\providecommand\BIBentryALTinterwordstretchfactor{4}
\providecommand\BIBentryALTinterwordspacing{\spaceskip=\fontdimen2\font plus
\BIBentryALTinterwordstretchfactor\fontdimen3\font minus
  \fontdimen4\font\relax}
\providecommand\BIBforeignlanguage[2]{{%
\expandafter\ifx\csname l@#1\endcsname\relax
\typeout{** WARNING: IEEEtran.bst: No hyphenation pattern has been}%
\typeout{** loaded for the language `#1'. Using the pattern for}%
\typeout{** the default language instead.}%
\else
\language=\csname l@#1\endcsname
\fi
#2}}

\bibitem{Windpassinger04}
C.~Windpassinger, R.~F.~H. Fischer, T.~Vencel, and J.~B. Huber, ``{Precoding in
  multiantenna and multiuser communications},'' \emph{{IEEE} Trans. Wireless
  Commun.}, vol.~3, no.~4, pp. 1305--1316, Jul. 2004.

\bibitem{Jindal06}
N.~Jindal, ``{MIMO broadcast channels with finite-rate feedback},''
  \emph{{IEEE} Trans. Inform. Theory}, vol.~52, pp. 5045--5060, Nov. 2006.

\bibitem{teletar99}
{\.{I}. E. Telatar}, ``{Capacity of multi-antenna Gaussian channels},''
  \emph{Europ. Trans. Commun.}, pp. 585--595, Nov.-Dec. 1999.

\bibitem{Foschini96}
G.~J. Foschini and J.~E. Hall, ``{Layered space-time architecture for wireless
  communication in a fading environment when using multi-element antennas},''
  \emph{Bell Labs Tech. J.}, pp. 41--59, 1996.

\bibitem{Wolniansky98}
P.~W. Wolniansky, G.~J. Foschini, G.~D. Golden, and R.~A. Valenzuela,
  ``{V-BLAST: An architecture for realizing very high data rates over the
  rich-scattering wireless channel},'' \emph{\emph{in} Proc. URSI Int.
  Symposium on Signals, Systems, and Electronics}, pp. 295--300, Pisa, Italy
  1998.

\bibitem{Haustein02}
T.~Haustein, C.~von Helmolt, E.~Jorswieck, V.~Jungnickel, and V.~Pohl,
  ``{Performance of MIMO systems with channel inversion},'' \emph{\emph{in}
  Proc. 55th IEEE Veh. Technol. Conf.}, pp. 35--39, Birmingham, AL, May 2002.

\bibitem{Joham02}
M.~Joham, K.Kusume, M.~H. Gzara, and W.~Utschick, ``{Transmit Wiener filter for
  the downlink of TDD DS-CDMA systems},'' \emph{\emph{in} Proc. IEEE 7th Symp.
  Spread-Spectrum Technol., Applicat.}, pp. 9--13, Prague, Czech Republic, Sep.
  2002.

\bibitem{Peel05}
C.~B. Peel, B.~M. Hochwald, and A.~L. Swindlehurst, ``{A vector-perturbation
  technique for near-capacity multiantenna multi-user communication - Part I:
  Channel inversion and regularization},'' \emph{IEEE Trans. Commun.}, vol.~53,
  no.~1, pp. 195--202, Jan. 2005.

\bibitem{Yang94}
J.~Yang and S.~Roy, ``{Joint transmitter-receiver optimization for multi-input
  multi-output systems with decision feedback},'' \emph{{IEEE} Trans. Inform.
  Theory}, vol.~40, no.~5, pp. 1334--1347, Sep. 1994.

\bibitem{Hochwald05}
B.~M. Hochwald, C.~B. Peel, and A.~L. Swindlehurst, ``{A vector-perturbation
  technique for near-capacity multiantenna multiuser communication - Part II:
  Perturbation},'' \emph{IEEE Trans. Commun.}, vol.~53, no.~3, pp. 537--544,
  Mar. 2005.

\bibitem{Fischer02}
R.~F.~H. Fischer, \emph{{Precoding and Signal Shaping for Digital
  Transmission}}, 1st~ed.\hskip 1em plus 0.5em minus 0.4em\relax USA: New York:
  Wiley, 2002.

\bibitem{Costa83}
M.~Costa, ``{Writing on dirty paper},'' \emph{IEEE Trans. Inform. Theory},
  vol.~29, no.~1, pp. 439--441, May 1983.

\bibitem{Weingarten06}
H.~Weingarten, Y.~Steinberg, and S.~Shamai{ }(Shitz), ``{The capacity region of
  the Gaussian multiple-input multiple-output broadcast channel},''
  \emph{{IEEE} Trans. Inform. Theory}, vol.~52, no.~9, pp. 3936--3964, Sep.
  2006.

\bibitem{Damen00}
M.~O. Damen, A.~Chkeif, and J.-C. Belfiore, ``{Lattice code decoder for
  space-time codes},'' \emph{{IEEE} Commun. Lett.}, vol.~4, pp. 161--163, May.
  2000.

\bibitem{Harashima72}
H.~Harashima and H.~Miyakawa, ``{Matched-transmission technique for channels
  with intersymbol interference},'' \emph{{IEEE} Trans. Commun.}, vol.~20, pp.
  774--780, Aug. 1972.

\bibitem{Amico08}
A.~A. D'Amico, ``{Tomlinson-Harashima precoding in MIMO systems: A unified
  approach to transceiver optimization based on multiplicative
  schur-convexity},'' \emph{{IEEE} Trans. Signal Process.}, vol.~56, no.~8, pp.
  3662--3677, Aug. 2008.

\bibitem{Caire03}
G.~Caire and S.~Shamai{ }(Shitz), ``{On the achievable throughput of a
  multi-antenna Gaussian broadcast channel},'' \emph{IEEE Trans. Inform.
  Theory}, vol.~49, no.~7, pp. 1691--1706, Jul. 2003.

\bibitem{Au-Yeung07}
C.~K. Au-Yeung and D.~J. Love, ``{On the performance of random vector
  quantization limited feedback beamforming in a MISO system},'' \emph{{IEEE}
  Trans. Wireless Commun.}, vol.~6, pp. 458--462, Feb. 2007.

\bibitem{Yoo_limited}
T.~Yoo, N.~Jindal, and A.~Goldsmith, ``{Multi-antenna downlink channels with
  limited feedback and user selection},'' \emph{{IEEE} J. Sel. Areas Commun.},
  vol.~25, no.~7, pp. 1478--1491, Sep. 2007.

\bibitem{Israa11}
I.~Slim, A.~Mezghani, and J.~A. Nossek, ``{Quantized CDI based Tomlinson
  Harashima precoding for broadcast channels},'' in \emph{Proc. IEEE Int.\
  Conf.\ on Commun. (ICC)}, Jun. 2011, pp. 1--5.

\bibitem{Wesel98}
R.~D. Wesel and J.~M. Cioffi, ``{Achievable rates for Tomlinson-Harashima
  precoding},'' \emph{{IEEE} Trans. Inform. Theory}, vol.~44, no.~2, pp.
  824--831, Mar. 1998.

\bibitem{Uri04}
U.~Erez and R.~Zamir, ``{Achieving $1/2 log(1 + SNR)$ on the AWGN channel with
  lattice encoding and decoding},'' \emph{{IEEE} Trans. Inform. Theory},
  vol.~50, no.~10, pp. 2293--2314, Oct. 2004.

\bibitem{Mukkavilli03}
K.~K. Mukkavilli, A.~Sabharwal, E.~Erkip, and B.~Aazhang, ``{On beamforming
  with finite rate feedback in multiple antenna systems},'' \emph{{IEEE} Trans.
  Inform. Theory}, vol.~50, no.~10, pp. 2562--2579, Oct. 2003.

\bibitem{Gradshteyn2000}
I.~S. Gradshteyn and I.~M. Ryzhik, \emph{{Table of Integrals, Series, and
  Products}}, 6th~ed.\hskip 1em plus 0.5em minus 0.4em\relax New York:
  Academic, 2000.

\bibitem{L.sun_multiuser}
L.~Sun and M.~R. McKay, ``{Eigen-based transceivers for the MIMO broadcast
  channel with semi-orthogonal user selection},'' \emph{{IEEE} Trans. Signal
  Process.}, vol.~58, no.~10, pp. 5246--5261, Oct. 2010.

\bibitem{matrix_com}
G.~H. Golub and C.~F.~V. Loan, \emph{{Matrix Computations}}, 3rd~ed.\hskip 1em
  plus 0.5em minus 0.4em\relax Baltimore: Johns Hopkins Univ. Press, 1996.

\bibitem{Bhagavatula11_adaptive}
R.~Bhagavatula and J.~R.~W.~Heath, ``{Adaptive limited feedback for sum-rate
  maximizing beamforming in cooperative multicell systems},'' \emph{{IEEE}
  Trans. Signal Process.}, vol.~59, no.~2, pp. 800--811, Feb. 2011.

\bibitem{Bhagavatula11_bit}
R.~Bhagavatula and R.~W. Heath, ``{Adaptive bit partitioning for multicell
  intercell interference nulling with delayed limited feedback},'' \emph{{IEEE}
  Trans. Signal Process.}, vol.~59, no.~8, pp. 3824--3836, Aug. 2011.

\bibitem{Kershaw}
D.~Kershaw, ``{Some extensions of W. Gautschi's inequalities for the gamma
  function},'' \emph{Math. Comput.}, vol.~41, no. 164, pp. 607--611, Oct. 1983.

\end{thebibliography}
%
%
%
%
%
%
%

\end{document}